\numberwithin{equation}{section}
\newtheorem {Lemma}{Lemma}[section]
\newtheorem {Theorem} {Theorem}[section]
\newtheorem {Proposition} {Proposition}[section]
\begin{document}
\title{On  extremal leaf status and internal status of trees}

\author{Haiyan Guo\footnote{E-mail: ghaiyan0705@163.com}, Bo Zhou\footnote{Corresponding author. E-mail: zhoubo@scnu.edu.cn}
\\
School of  Mathematical Sciences, South China Normal University  \\
Guangzhou 510631, P.R. China}

\date{}
\maketitle

\begin{abstract}
For a vertex $u$ of a tree $T$, the leaf (internal, respectively) status  of $u$ is the sum of the distances from $u$ to all leaves (internal vertices, respectively) of $T$. The minimum (maximum, respectively) leaf status of a tree $T$ is the minimum (maximum, respectively) leaf statuses of all vertices of $T$. The minimum (maximum, respectively) internal status of a tree $T$ is the minimum (maximum, respectively) internal statuses of all vertices of $T$.
We give  the smallest and largest values for the minimum leaf status, maximum leaf status, minimum internal status, and maximum internal status of a tree and characterize the extremal cases. We also discuss these parameters of a tree with given  diameter or maximum degree.
\\ \\
{\bf Mathematics Subject Classifications:} 05C12,  05C35\\ \\
{\bf Keywords and phrases:} minimum leaf status, maximum leaf status, minimum internal status, maximum internal status, tree, diameter, maximum degree
\end{abstract}

\section{Introduction}

Let $G$ be a connected graph of order $n\ge 2$ with vertex set $V(G)$. For $u,v\in V(G)$, the distance between  $u$ and $v$ in $G$, denoted by $d_G(u,v)$, is the length of a shortest path connecting $u$ and $v$ in $G$. Let $\emptyset\ne A\subseteq V(G)$. For $u\in V(G)$,
the $A$-status of $u$ in $G$ is defined as
\[
s_G(u,A)=\sum_{v\in A}d_G(u,v).
\]
The minimum $A$-status of $G$ is $s_A(G)=\min\{s_G(u, A): u\in V(G)\}$, while the maximum $A$-status of $G$ is $S_A(G)=\max\{s_G(u, A): u\in V(G)\}$.
The  $A$-centroid (or $A$-median) of $G$ is defined as $\{u\in V(G): s_G(u,A)=s_A(G)\}$.

Let $s_G(u)=s_G(u,V(G))$ for $u\in V(G)$, $s(G)=s_{V(G)}(G)$ and $S(G)=S_{V(G)}(G)$. Then $s_G(u)$ is the status (or transmission) of $u$ in $G$ \cite{BH,DS,VC}, $s(G)$ is the minimum status of $G$, and $S(G)$ is the maximum status of $G$. Both minimum and maximum  statuses have been studied extensively, and it should be noted that the  minimum (maximum, respectively) status appeared also  in its normalized form divided by $n-1$ that is called the proximity (remoteness, respectively) of the graph, see, e.g., \cite{AH,AH3,AH4,Dan,Dan2,DWS,LZG,LTSZ,KA,PZ,RB, Se,Z}.


Let $T$ be a tree. For $u\in V(T)$, denote by $N_T(u)$ the set of vertices adjacent to $u$ in $T$ and the cardinality of $N_T(u)$ is the degree of $u$ in $T$, denoted by $\delta_T(u)$.
A vertex of degree one in a tree is called a leaf and a vertex of degree at least two in a tree is called an internal vertex.
Let $L(T)$ and $I(T)$ be the set of leaves and the set of internal vertices of $T$, respectively.

Slater \cite{Sla} studied structure of the $A$-centroid of a tree $T$ with $\emptyset \ne A\subseteq V(T)$.
For example, it was shown in \cite[Theorem 5]{Sla} that the $A$-centroid induces a path in a tree for any subset $A$. The another related concept is called $A$-center, which is defined to be
the set $\{u\in V(T): e_A(u,T)\}$, where $e_A(u,T)=\max\{d_T(u,v): v\in A\}$. It was shown in \cite{Sla} that $L(T)$-center and $L(T)$-centroid have quite different properties.
Wang \cite{W} characterized the trees with maximum distance between the $L(T)$-centroid and the centroid
or the  $I(T)$-centroid, and maximum distance between  the  $I(T)$-centroid and the centroid, respectively. Here for two subset $A$ and $B$ of vertices of a connected graph $G$, the distance between $A$ and $B$ is smallest distance between a vertex from $A$ and a vertex from  $B$ in $G$.

The minimum  leaf status (internal status, respectively) of $T$ is defined to be the minimum $L(T)$-status ($I(T)$-status, respectively), denoted by $ls(T)$ ($is(T)$, respectively). That is,
\[
ls(T)=s_{L(T)}(T)  \mbox{ and } is(T)=s_{I(T)}(T).
\]
The maximum leaf status (internal status, respectively) of $T$ is defined to be the maximum $L(T)$-status ($I(T)$-status, respectively), denoted by $LS(T)$ ($IS(T)$, respectively). That is,
\[
LS(T)=S_{L(T)}(T)  \mbox{ and }  IS(T)=S_{I(T)}(T).
\]
In this paper, we study the extremal properties of these four parameters of trees.
We give  the smallest and largest values for the minimum leaf status, maximum leaf status, minimum internal status, and maximum internal status of a tree and characterize the extremal cases. We also discuss these parameters of trees with given  diameter or maximum degree.

We note related work
of  Dimitrov et al. \cite{DIS}, where, if restricted to trees, they studied the extremal properties
of $\sum_{u,v\in I(T)}d_T(u,v)$
and $\sum_{u,v\in V(T)\atop\{u,v\}\cap L(T)\ne \emptyset}d_T(u,v)$ for trees $T$.

\section{Preliminaries}

The diameter of a connected graph $G$ is the maximum distance between two vertices. Denote by $S_n$ and $P_n$ the star and the path of order $n$, respectively. A  double star is a tree with diameter $3$, which is obtainable by adding an edge between the centers of two nontrivial stars.

For a vertex $u$ of a nontrivial tree $T$, the components of $T-u$ are called the branches of $T$ at $u$. For $A\subseteq V(T)$, the $A$-branch-weight of $u$ in $T$, denoted by $bw_T(u,A)$, is defined to be
\[
\max\{|A\cap V(B)|: \mbox{B is a branch of $T$ at $u$}\}.
\]

For a tree $T$, a vertex in the $A$-centroid is called an $A$-centroid vertex.
The following lemma is a restatement of \cite[Theorem 8]{Sla}.

\begin{Lemma} \label{oldLem} Let $T$ be a tree of order $n\ge 2$. Then  $u$ is an $A$-centroid vertex if and only if  $bw_T(u,A)\le bw_T(v,A)$ for any $v\in V(T)$.
\end{Lemma}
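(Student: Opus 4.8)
The statement is a characterization of the $A$-centroid of a tree via branch-weights, and my plan is to prove it directly by a local-exchange (or "moving") argument, which is the standard technique for centroid-type results in trees. The key quantity to control is how $s_T(u,A)$ changes when we move along an edge. Concretely, let $uv\in E(T)$, and let $B_v$ be the branch of $T$ at $u$ containing $v$ (equivalently, the component of $T-uv$ containing $v$). Every vertex of $A$ in $B_v$ is one step closer to $v$ than to $u$, and every vertex of $A$ outside $B_v$ is one step farther. Hence
\[
s_T(v,A)-s_T(u,A)=|A|-2\,|A\cap V(B_v)|.
\]
This identity is the engine of the whole proof, so I would establish it carefully first; everything else is bookkeeping on top of it.

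Next I would use this identity to prove the "if" direction in contrapositive form: if $u$ is \emph{not} in the $A$-centroid, I want a vertex $v$ with $bw_T(v,A)<bw_T(u,A)$. Take $v$ adjacent to $u$ lying in a branch $B$ at $u$ that attains $bw_T(u,A)=|A\cap V(B)|$; among all such maximal branches choose one, and within it pick $v$ to be the neighbor of $u$. By the displayed identity, moving from $u$ toward $v$ cannot increase the $A$-status precisely when $2\,|A\cap V(B_v)|\ge |A|$, i.e. when the branch we step into is "$A$-heavy." The crux is to show that repeatedly stepping into an $A$-heaviest branch strictly decreases the branch-weight once it strictly exceeds $|A|/2$, and that this process terminates at an $A$-centroid vertex; this uses that the branches at $v$ on the "$u$-side" together contain $|A|-|A\cap V(B_v)|<|A\cap V(B_v)|$ elements of $A$ when $B_v$ is strictly heavy, so the new branch-weight at $v$ is at most $\max\{|A\cap V(B_v)|-1,\ |A|-|A\cap V(B_v)|\}<|A\cap V(B_v)|=bw_T(u,A)$. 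Running this argument from an arbitrary vertex also shows existence of a minimizer of the branch-weight, and that this minimizer lies in the $A$-centroid.

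For the converse — that every vertex minimizing the branch-weight is actually an $A$-centroid vertex — I would argue that the minimum branch-weight value is $\lceil |A|/2\rceil$ is not needed; instead, suppose $u$ minimizes $bw_T(\cdot,A)$ and let $w$ be any other vertex, with $v$ the neighbor of $u$ on the $u$–$w$ path. Since $u$ minimizes, $bw_T(v,A)\ge bw_T(u,A)$, and the branch of $T$ at $v$ containing $u$ has $A$-weight at least $bw_T(v,A)\ge bw_T(u,A)\ge |A\cap V(B_v)|$ where $B_v$ is the branch at $u$ toward $v$; hence $|A\cap V(B_v)|\le |A|-|A\cap V(B_v)|$, i.e. $2\,|A\cap V(B_v)|\le |A|$, so by the identity $s_T(v,A)\ge s_T(u,A)$. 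Iterating along the path from $u$ to $w$ gives $s_T(w,A)\ge s_T(u,A)$, so $u$ is an $A$-centroid vertex. Combined with the previous paragraph (which gives that \emph{some} branch-weight minimizer is an $A$-centroid vertex, and that all $A$-centroid vertices minimize the branch-weight), this yields the stated equivalence.

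The main obstacle I anticipate is the termination/monotonicity bookkeeping in the second paragraph: one must be careful that when $2\,|A\cap V(B_v)|=|A|$ exactly (possible only when $|A|$ is even) the $A$-status is unchanged, so the process can stall on a plateau of equal-status vertices, and one has to check that this plateau consists precisely of branch-weight minimizers and lies inside the $A$-centroid. Handling the edge cases $|A|$ even versus odd, and the case where $A$-weight is split evenly between two adjacent vertices, is where the argument needs the most care; the rest is a routine application of the branch-weight identity. Since this lemma is quoted from \cite[Theorem 8]{Sla}, I would in fact just cite it, but the above is the self-contained argument I would give if a proof were required.
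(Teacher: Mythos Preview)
The paper does not prove this lemma at all: it states that Lemma~\ref{oldLem} is a restatement of \cite[Theorem~8]{Sla} and moves on without argument. Your closing remark that you would ``in fact just cite it'' therefore matches the paper's treatment exactly.

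Your self-contained sketch follows the standard local-exchange route and is sound in outline, but the third paragraph contains a genuine gap. You write that ``the branch of $T$ at $v$ containing $u$ has $A$-weight at least $bw_T(v,A)$''; this is backwards, since $bw_T(v,A)$ is by definition the \emph{maximum} $A$-weight over branches at $v$, so any single branch has weight \emph{at most} $bw_T(v,A)$. The chain $bw_T(v,A)\ge bw_T(u,A)\ge |A\cap V(B_v)|$ is correct but does not by itself give $|A\cap V(B_v)|\le |A|-|A\cap V(B_v)|$. What you actually need at this point is the auxiliary fact that every branch-weight minimizer $u$ satisfies $bw_T(u,A)\le |A|/2$; once you have that, $|A\cap V(B_v)|\le bw_T(u,A)\le |A|/2$ is immediate, and the iteration along the $u$--$w$ path works because each successive branch is nested inside $B_v$. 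But establishing $bw_T(u,A)\le |A|/2$ for minimizers is precisely the descent/termination argument you flag as the ``main obstacle'' and defer---so paragraph~3 is silently leaning on the part you postponed. Relatedly, in paragraph~2 the bound $bw_T(v,A)\le\max\{|A\cap V(B_v)|-1,\ |A|-|A\cap V(B_v)|\}$ fails when $v\notin A$ and $\delta_T(v)=2$; you correctly note that iteration is then needed, but a clean write-up should make the termination explicit (e.g.\ the heavy branch loses at least one vertex at each step, so the walk is finite).
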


For $u,v\in V(T)$, denote by $n_T(u,v|A)$ the number of vertices in $A$ closer to $u$ than to $v$.  Let $T$ be a tree with $u\in V(T)$. For $A=L(T), I(T)$,
 Wang \cite[Proposition 3.1]{W} stated that
$u$ is an $A$-centroid vertex of $T$ if and only if $n_T(u,v|A)\ge n_T(v,u|A)$ for any $v\in N_T(u)$.

We give a somewhat easy  necessary and sufficient condition for a vertex of a tree $T$ being an  $A$-centroid vertex for $A=L(T), I(T)$.

\begin{Lemma}\label{newlemma}
Let $T$ be a tree of order $n\ge 3$ with $u\in V(T)$. For $A=L(T), I(T)$,
$u$ is an $A$-centroid vertex of $T$ if and only if $bw_T(u,A)\le \frac{|A|}{2}$.
\end{Lemma}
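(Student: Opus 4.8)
The plan is to deduce the statement from the characterizations of $A$-centroid vertices recalled just above, the shortest route being through the criterion of Wang \cite{W}: for $A=L(T),I(T)$, a vertex $u$ is an $A$-centroid vertex of $T$ if and only if $n_T(u,v|A)\ge n_T(v,u|A)$ for every $v\in N_T(u)$. The bridge to branch-weights is the elementary observation that for an edge $uv$ of a tree every vertex $x$ is strictly closer to exactly one of $u,v$ (if the $x$--$u$ path passes through $v$ then $d_T(x,v)=d_T(x,u)-1$; otherwise the $x$--$v$ path is the $x$--$u$ path followed by the edge $uv$, so $d_T(x,v)=d_T(x,u)+1$). Hence there are no ties, $n_T(u,v|A)+n_T(v,u|A)=|A|$, and the inequality $n_T(u,v|A)\ge n_T(v,u|A)$ is equivalent to $n_T(v,u|A)\le \frac{|A|}{2}$.

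Next I would record that $n_T(v,u|A)=|A\cap V(B_v)|$, where $B_v$ denotes the branch of $T$ at $u$ containing $v$: a vertex of $A$ is closer to $v$ than to $u$ exactly when its path to $u$ runs through $v$, which happens precisely for the $A$-vertices lying in $B_v$. In a tree the map $v\mapsto B_v$ is a bijection from $N_T(u)$ onto the set of branches of $T$ at $u$, so
\[
\max_{v\in N_T(u)}n_T(v,u|A)=\max\bigl\{|A\cap V(B)|:B\text{ a branch of }T\text{ at }u\bigr\}=bw_T(u,A).
\]
Putting the two displays together, $u$ is an $A$-centroid vertex iff $n_T(v,u|A)\le\frac{|A|}{2}$ for all $v\in N_T(u)$, iff $bw_T(u,A)\le\frac{|A|}{2}$, which is exactly the claim; the hypothesis $n\ge 3$ is used only so that $L(T)$ and $I(T)$ are nonempty and ``$A$-centroid'' is defined.

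If one prefers to argue directly from Lemma~\ref{oldLem}, the two implications go as follows. For sufficiency, suppose $bw_T(u,A)\le\frac{|A|}{2}$ and let $v\ne u$; everything of $T$ on the far side of $v$ from $u$ (all branches at $v$ other than the one containing $u$, together with $v$ itself) lies inside a single branch of $T$ at $u$, hence carries $A$-weight at most $bw_T(u,A)$, so the branch of $T$ at $v$ that contains $u$ carries $A$-weight at least $|A|-bw_T(u,A)\ge\frac{|A|}{2}\ge bw_T(u,A)$; thus $bw_T(v,A)\ge bw_T(u,A)$ for all $v$ (trivially for $v=u$), and Lemma~\ref{oldLem} gives that $u$ is an $A$-centroid vertex. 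For necessity, if $bw_T(u,A)>\frac{|A|}{2}$, let $B_1$ be the (necessarily unique) branch of $T$ at $u$ with $A$-weight $bw_T(u,A)$ and walk from $u$ step by step into $B_1$; at each vertex reached the branch pointing back toward $u$ has $A$-weight $<\frac{|A|}{2}$, so any branch of $A$-weight more than $\frac{|A|}{2}$ there is strictly deeper inside $B_1$, and since $B_1$ is finite the walk reaches a vertex $w$ with $bw_T(w,A)\le\frac{|A|}{2}<bw_T(u,A)$, whence $u$ is not an $A$-centroid vertex by Lemma~\ref{oldLem}. Along the Wang route there is essentially no obstacle beyond bookkeeping (the only point needing attention is the treatment of $u$ itself when $u\in A$, which is automatic); along the Slater route the one genuinely non-mechanical step is verifying that the inward walk terminates at a vertex without a heavy branch.
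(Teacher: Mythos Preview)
Your proof is correct, and the Wang route is a genuinely different and cleaner argument than the paper's. The paper does not use Wang's neighbor criterion at all: for sufficiency it invokes Lemma~\ref{oldLem} and then verifies $bw_T(u,A)\le bw_T(v,A)$ for every $v$ by splitting into the cases $v\in V(B_1)$ versus $v\notin V(B_1)$ and, within the first case, into $A=L(T)$ versus $A=I(T)$ (because $\sum_j a_j$ equals $|A|$ or $|A|-1$ accordingly); for necessity it computes $s_T(u,A)-s_T(u_1,A)$ directly and shows it is positive, again splitting $A=L(T)$ from $A=I(T)$ and handling small $|A|$ separately. Your Wang route avoids all of this case analysis by reducing the statement to the single identity $n_T(v,u|A)=|A\cap V(B_v)|$ together with $n_T(u,v|A)+n_T(v,u|A)=|A|$, and thereby treats both choices of $A$ uniformly. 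Your alternative Slater-route sufficiency is also tidier than the paper's, since the observation that the complement of the $u$-branch at $v$ sits inside a single branch at $u$ works without distinguishing leaves from internal vertices; for necessity your walk argument is different from (and equally valid as) the paper's direct status comparison.
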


\begin{proof}  Let $r=\delta_T(u)$ and $N_T(u)=\{u_1,\dots,u_r\}$. For $i=1, \dots, r$, let $B_i$ be the branch of $T$  at $u$  containing $u_i$ and  let $a_i=|A\cap V(B_i)|$. Assume that $a_1\ge \dots\ge a_r$. Then, by definition, $bw_T(u,A)=a_1$.

Suppose that $bw_T(u,A)\le \frac{|A|}{2}$, i.e., $a_1\le \frac{|A|}{2}$. For any $v\in V(T)\setminus (V(B_1)\cup\{u\})$, say $v\in V(B_i)$ with  $2\le i\le r$,  as $T-V(B_i)$ is a subtree of a branch at $v$, we have $bw_T(v,A)\ge|A\cap (V(T)\setminus V(B_i))|\ge \sum_{j=1}^ra_j -a_i\ge a_1=bw_T(u,A)$.
If $A=L(T)$, then $\sum_{j=1}^ra_j=|A|$, so $\sum_{j=2}^ra_j\ge \frac{|A|}{2}$, and  for any $v\in V(B_1)$, we have $bw_T(v,A)\ge|A\cap (V(T)\setminus V(B_1))|= \sum_{j=2}^ra_j\ge \frac{|A|}{2}\ge  bw_T(u,A)$.
If  $A=I(T)$, then  $\sum_{j=1}^ra_j=|A|-1$, so $\sum_{j=2}^ra_j\ge \frac{|A|}{2}-1$, and for any $v\in V(B_1)$, we have $bw_T(v,A)\ge|A\cap (V(T)\setminus V(B_1))|= 1+\sum_{j=2}^ra_j\ge \frac{|A|}{2}\ge  bw_T(u,A)$.
Therefore $bw_T(u,A)\le bw_T(v,A)$ for any $v\in V(T)$, which implies that  $u$ is an  $A$-centroid vertex of $T$ by Lemma~\ref{oldLem}.

Conversely, suppose  that  $u$ is an $A$-centroid  vertex of $T$.

\noindent {\bf Case 1.}  $A=L(T)$.

If $|A|=2$, then $T\cong P_n$ and $u$ may be any internal vertex. Then $bw_T(u,A)=1\le \frac{|A|}{2}$.
Suppose that $|A|\ge 3$. If $bw_T(u,A)>\frac{|A|}{2}$, i.e., $a_1>\frac{|A|}{2}$, then $\sum_{i=2}^{r}a_i<\frac{|A|}{2}$,
so $s_T(u,L(T))-s_T(u_1,L(T))=a_1-\sum_{i=2}^{r}a_i>0$, implying that $s_T(u,L(T))>s_T(u_1,L(T))$,
a contradiction. It follows that $bw_T(u,A)\le \frac{|A|}{2}$.

\noindent {\bf Case 2.}  $A=I(T)$.

If $|A|=1$, then $T\cong S_n$ and $u$ is the center. If $|A|=2$, then $T$ is a double star and $u$ may be either internal vertices. So we have $bw_{T}(u,A)=0,1 \le \frac{|A|}{2}$ if $|A|=1,2$.
Suppose that $|A|\ge 3$. If $bw_T(u,A)>\frac{|A|}{2}$, i.e., $a_1>\frac{|A|}{2}$, then $\sum_{i=2}^{r}a_i<\frac{|A|}{2}-1$, so
$s_T(u,I(T))-s_T(u_1,I(T))=a_1-1-\sum_{i=2}^{r}a_i>0$, a contradiction.
It follows that $bw_T(u,A)\le \frac{|A|}{2}$.

By combining Cases 1 and 2, we have $bw_T(u,A)\le \frac{|A|}{2}$.
\end{proof}

A leaf peripherian vertex of a tree $T$ on $n$ vertices is a vertex of $T$ with maximum leaf status. Note that every vertex of $P_n$ is a leaf peripherian vertex and $LS(P_n)=n-1$. In the following lemma, we show that a leaf peripherian vertex of a tree that is not a path  must be a leaf.

\begin{Lemma} \label{peripherian}
Let $T$ be a tree that is not a path. Let $u\in V(T)$. If $u$ is a leaf peripherian vertex of $T$, then $u\in L(T)$.
\end{Lemma}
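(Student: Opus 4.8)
The plan is to argue by contradiction: suppose $u$ is an internal vertex of $T$ that is a leaf peripherian vertex, and produce a vertex with strictly larger leaf status. Since $T$ is not a path, it has at least three leaves. The key idea is that the leaf status function $s_T(\cdot, L(T))$ behaves in a controlled way along edges: moving from $u$ to a neighbour $v\in N_T(u)$ changes the leaf status by exactly $n_T(u,v\mid L(T)) - n_T(v,u\mid L(T))$, the difference between the number of leaves closer to $u$ and the number closer to $v$. So to increase the leaf status I want to step toward the neighbour $v$ whose branch contains the \emph{fewest} leaves.

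First I would set $r=\delta_T(u)\ge 2$, write $N_T(u)=\{u_1,\dots,u_r\}$, let $B_i$ be the branch of $T$ at $u$ containing $u_i$, and set $a_i=|L(T)\cap V(B_i)|$. Pick the index, say $i=1$, with $a_1$ minimal; then $a_1\le \frac{|L(T)|}{r}\le \frac{|L(T)|}{2}$, and in fact $\sum_{j\ne 1}a_j = |L(T)| - a_1 \ge a_1$. The step from $u$ to $u_1$ changes the leaf status by $(\sum_{j\ne 1}a_j) - a_1 \ge 0$, so $s_T(u_1,L(T))\ge s_T(u,L(T))$. If the inequality is strict we already have a contradiction with $u$ being a leaf peripherian vertex. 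The only remaining case is $\sum_{j\ne 1}a_j = a_1$, i.e. $a_1 = \frac{|L(T)|}{2}$ and $r=2$ (so $u$ has exactly two branches, one of which, $B_1$, carries exactly half the leaves).

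In that remaining case, I continue walking: from $u$ move to $u_1$, and then keep moving into $B_1$ along the direction that decreases the leaf count of the branch being left behind, mimicking the walk toward an $L(T)$-centroid vertex but in reverse — i.e. toward a leaf. At each step the leaf status is nondecreasing; I want to show it must eventually strictly increase before reaching a leaf, or else reach a leaf with leaf status still $\ge s_T(u,L(T))$, and then observe that at a \emph{leaf} $\ell$ of $B_1$ the other side $V(T)\setminus V(B_1)$ contains $\frac{|L(T)|}{2}\ge 2$ leaves none of which is adjacent-side-trivial, forcing a strict gain at some point. Concretely: since $T$ is not a path, not every internal vertex has degree $2$, so somewhere along this walk we hit a vertex $w$ of degree $\ge 3$; at $w$ the branch we came from already contains at least $\frac{|L(T)|}{2}$ leaves, and splitting off the extra branch makes the step strictly increasing. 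Alternatively — and more cleanly — I would invoke Lemma~\ref{newlemma}: a vertex $x$ is an $L(T)$-centroid vertex iff $bw_T(x,L(T))\le \frac{|L(T)|}{2}$, and dually a leaf peripherian vertex ought to be characterized by $bw_T(x,L(T))$ being \emph{maximal}; a leaf $\ell$ has $bw_T(\ell,L(T)) = |L(T)|-1$ when $|L(T)|\ge 2$ unless $T$ is a path, which is the global maximum of branch-weight, so leaves are the natural candidates.

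The main obstacle I anticipate is the degenerate case $r=2$ with $a_1=a_2=\frac{|L(T)|}{2}$, where the first step gives only equality; one must push the argument further into the branch and use the hypothesis "$T$ is not a path" in an essential way to locate a high-degree vertex or a leaf whose removal of the opposite side is strictly heavier, thereby extracting the needed strict inequality. Handling this propagation cleanly — ensuring the walk terminates at a leaf and that strictness is gained somewhere along it — is the delicate point; everything else is the routine observation that leaf status changes by the signed leaf-count difference across each edge.
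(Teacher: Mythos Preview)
Your plan is sound and, for the non-degenerate case ($r\ge 3$, or $r=2$ with $a_1<a_2$), matches the paper's proof exactly: one step toward the smallest-leaf branch strictly increases leaf status. The difference lies in the tie case $r=2$, $a_1=a_2$. The paper does not walk; it jumps directly to a leaf $v\in L(T)\cap V(B_1)$, lets $z$ be the neighbour of $v$, and computes $s_T(v,L(T))-s_T(u,L(T))$ explicitly, bounding it below by $a_1-1>0$ (using $a_1\ge 2$, which holds because $|L(T)|\ge 3$ forces $|L(T)|\ge 4$ when $a_1=a_2$). Your walking argument also works, but the justification needs a small sharpening: the reason the walk into $B_1$ must meet a vertex of degree $\ge 3$ is precisely that $a_1\ge 2$ (so $B_1$ is not a single path), not merely that ``$T$ is not a path'' somewhere. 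Once that vertex $w$ is reached, the branch toward $u$ carries exactly $a_2=|L(T)|/2$ leaves and the $\ge 2$ remaining branches share $a_1$ leaves, so the smallest of them has at most $a_1-1<|L(T)|/2$ leaves and the next step is strictly increasing. The paper's direct computation is shorter; your walk is more conceptual and would generalize to showing leaf status is strictly increasing along any path from an internal vertex out to a suitably chosen leaf. Finally, your aside invoking Lemma~\ref{newlemma} is a red herring: maximal $bw_T(\cdot,L(T))$ does not characterize leaf peripherian vertices (branch-weight is a max, leaf status is a sum), so that route would not yield a proof without further work.
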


\begin{proof}
We prove the lemma by contradiction. Suppose that $u$ is a leaf peripherian vertex of $T$ but $u\notin L(T)$. Then $\delta_{T}(u)\ge 2$. Let $r=d_{T}(u)$ and $N_T(u)=\{u_1,\dots, u_r\}$, where $r\ge 2$. For $i = 1,\dots,r$, let $B_i$ be the branch of $T$ at $u$ containing $u_i$, $L_i=L(T)\cap V(B_i)$ and $a_i=|L_i|$. Assume that $a_1\le  \dots \le a_r$.

Suppose first that $r=2$ and $a_2=a_1$. As $T$ is not a path, we have $a_2=a_1\ge 2$. Let $v\in L_1$ and let $z$ be the unique vertex adjacent to $v$ in $T$. Then
\begin{eqnarray*}
s_T(u,L(T))&=&\sum_{w\in L_1\setminus\{v\}}d_T(u,w)+d_T(u,v)+\sum_{w\in L_2}d_T(u,w),\\
s_T(v,L(T))&=&\sum_{w\in L_1\setminus\{v\}}d_T(v,w)+\sum_{w\in L_2}(d_T(v,u)+d_T(u,w)),
\end{eqnarray*}
and so
\begin{eqnarray*}
&&s_T(v,L(T))-s_T(u,L(T))\\
&=&\sum_{w\in L_1\setminus\{v\}}(d_T(v,w)-d_T(u,w))-d_T(u,v)+\sum_{w\in L_2}d_T(v,u)\\
&>&\sum_{w\in L_1\setminus\{v\}}(d_T(z,w)-d_T(u,w))-d_T(u,v)+\sum_{w\in L_2}d_T(v,u)\\
&\ge&-\sum_{w\in L_1\setminus\{v\}}d_T(z,u)-d_T(u,v)+\sum_{w\in L_2}d_T(v,u)\\
&=&-\sum_{w\in L_1\setminus\{v\}}(d_T(v,u)-1)-d_T(u,v)+\sum_{w\in L_2}d_T(v,u)\\
&=& \sum_{w\in L_1\setminus\{v\}}1\\
&=&a_1-1\\
&>&0.
\end{eqnarray*}
Thus $s_T(v,L(T))>s_T(u,L(T))$. This implies that $u$ can not  be a leaf peripherian vertex of $T$, a contradiction.

Suppose next that $r=2$ and $a_2>a_1$, or $r\ge 3$.
Then
\begin{eqnarray*}
s_T(u,L(T))&=&\sum_{v\in L_1}d_T(u,v)+\sum_{j=2}^{r}\sum_{v\in L_j}d_T(u,v),\\
s_T(u_1,L(T))&=&\sum_{v\in L_1}(d_T(u,v)-1)+\sum_{j=2}^{r}\sum_{v\in L_j}(d_T(u,v)+1),
\end{eqnarray*}
and so
\begin{eqnarray*}
s_T(u_1,L(T))-s_T(u,L(T))&=&\sum_{v\in L_1}(-1)+\sum_{j=2}^{r}\sum_{v\in L_j}1\\
&=&\sum_{i=2}^{r}a_i-a_1\\
&>&0.
\end{eqnarray*}
Thus $s_T(u_1,L(T))>s_T(u,L(T))$. This implies that $u$ is  not a leaf peripherian vertex of $T$, also a contradiction.

Therefore, $u\in L(T)$, as desired.
\end{proof}

A internal peripherian vertex of a tree $T$ is a vertex of $T$ with maximum internal status.

\begin{Lemma} \label{inperipherian}
Let $T$ be a tree. Suppose that $u$ is a internal peripherian vertex of $T$. Then $u\in L(T)$.
\end{Lemma}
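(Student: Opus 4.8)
The plan is to mirror the proof of Lemma~\ref{peripherian}, arguing by contradiction: assume $u$ is an internal peripherian vertex but $u\notin L(T)$, so $\delta_T(u)\ge 2$. First I would dispose of small cases. If $|I(T)|=0$ the statement is vacuous (no tree has no internal vertices when $n\ge 2$, so in fact $|I(T)|\ge 1$), and if $|I(T)|=1$ then $T\cong S_n$; here the internal status of the center is $0$ while every leaf has internal status $1$, so the unique internal vertex is not internal peripherian and the conclusion holds. So we may assume $|I(T)|\ge 2$, and in particular $T$ has at least two internal vertices, which forces a leaf whose neighbor is internal.

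Set $r=\delta_T(u)$, $N_T(u)=\{u_1,\dots,u_r\}$, let $B_i$ be the branch at $u$ containing $u_i$, put $I_i=I(T)\cap V(B_i)$ and $a_i=|I_i|$, ordered so that $a_1\le\cdots\le a_r$. Note $\sum_{i=1}^r a_i=|I(T)|-1$ since $u$ itself is internal. The key computation is to compare $s_T(u,I(T))$ with $s_T(u_1,I(T))$: moving from $u$ to $u_1$ decreases the distance to each internal vertex of $B_1$ by $1$, leaves the count of $u$ unchanged in the sense that $d_T(u_1,u)=1$, and increases the distance to each internal vertex outside $B_1$ by $1$. Thus
\[
s_T(u_1,I(T))-s_T(u,I(T)) = -a_1 + \Bigl(1+\sum_{j=2}^r a_j\Bigr) = |I(T)| - 2a_1,
\]
which is positive precisely when $a_1 < \frac{|I(T)|}{2}$, i.e.\ when $u_1$ lies in a ``light'' branch. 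Combined with the symmetric statement obtained by pushing toward the heaviest branch $B_r$ (moving to $u_r$ gives $s_T(u_r,I(T))-s_T(u,I(T))=|I(T)|-2a_r$, positive when $a_r<\frac{|I(T)|}{2}$), I can already conclude a contradiction unless exactly one branch, say $B_r$, is heavy with $a_r\ge\frac{|I(T)|}{2}$ and the inequality toward $u_r$ fails — but then the inequality toward $u_1$ (and indeed toward every $u_i$, $i<r$) strictly increases the internal status, again contradicting that $u$ is a maximizer, as long as $r\ge 2$ and not all branches are equally sized with $a_i=\frac{|I(T)|}{2}$ (which is impossible for $r\ge 2$ unless $r=2$ and $|I(T)|$ is even, $a_1=a_2$, forcing $a_1=a_2<\frac{|I(T)|}{2}$ anyway since $a_1+a_2=|I(T)|-1<|I(T)|$).

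The one genuinely delicate case, exactly as in Lemma~\ref{peripherian}, is $r=2$ with $a_1=a_2$; here the above telescoping gives only $s_T(u_1,I(T))-s_T(u,I(T))=|I(T)|-2a_1 = 1>0$ (using $2a_1=|I(T)|-1$), so in fact this case is not delicate at all for the internal status — the ``$-1$'' from counting $u$ itself breaks the tie in our favor. Wait: this means the argument is actually \emph{easier} than for the leaf status, and I would not need the more elaborate ``shift to a leaf $v$ adjacent to $z$'' maneuver. So the clean plan is: handle $|I(T)|\le 1$ directly; for $|I(T)|\ge 2$, with the branch decomposition above, compute $s_T(u_1,I(T))-s_T(u,I(T)) = |I(T)|-2a_1$ and $s_T(u_r,I(T))-s_T(u,I(T)) = |I(T)|-2a_r$; observe $a_1+a_r\le a_1+\sum_{j\ge 2}a_j = |I(T)|-1$, so $a_1$ and $a_r$ cannot both be $\ge\frac{|I(T)|}{2}$ (their sum would be $\ge|I(T)|>|I(T)|-1$), hence at least one of the two differences is positive, giving $s_T(u_i,I(T))>s_T(u,I(T))$ for some $i$, contradicting that $u$ is internal peripherian. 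The main obstacle, if any, is just making sure the bookkeeping about which internal vertices lie in which branch and the contribution of $u$ itself is airtight; once that is set up, the inequality $a_1+a_r<|I(T)|$ does all the work.
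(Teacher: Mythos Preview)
Your argument is correct and follows the same contradiction-plus-branch-decomposition strategy as the paper. The paper's proof first handles separately the case that some neighbor of $u$ is a leaf, then (assuming all neighbors are internal, so $I_1\ne\emptyset$) picks an arbitrary $z\in I_1$ and bounds $s_T(z,I(T))-s_T(u,I(T))\ge d_T(z,u)\bigl(1-a_1+\sum_{i\ge2}a_i\bigr)>0$ via the triangle inequality. Your version is slightly more economical: moving only to the adjacent neighbor $u_1$ gives the exact difference $|I(T)|-2a_1$, and since $2a_1\le a_1+a_2\le\sum_i a_i=|I(T)|-1$ (using $r\ge 2$) this is always $\ge1$, which absorbs the leaf-neighbor case ($a_1=0$) automatically and makes the separate $u_r$ computation unnecessary. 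One inconsequential slip: $P_2$ does have $|I(T)|=0$, contrary to your parenthetical, but the conclusion is trivial there since both vertices are leaves.
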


\begin{proof}
We prove the lemma by contradiction. Suppose that $u\notin L(T)$. Then $\delta_{T}(u)\ge 2$.

If there is a vertex $v\in N_T(u)$ with $\delta_T(v)=1$, then it is obvious that $s_T(v,I(T))>s_T(u,I(T))$, a contradiction. So $\delta_T(v)\ge 2$ for any $v\in N_T(u)$. Let $r=\delta_{T}(u)$ and $N_T(u)=\{u_1,\dots, u_r\}$, where $r\ge 2$. For $i = 1,\dots,r$, let $B_i$ be the branch of $T$ at $u$ containing $u_i$, $I_i=I(T)\cap V(B_i)$ and $a_i=|I_i|$. Assume  that $a_1\le  \dots \le a_r$. Let $z\in I_1$. Then
\begin{eqnarray*}
s_T(u,I(T))&=&\sum_{w\in I_1\setminus\{z\}}d_T(u,w)+d_T(u,z)+\sum_{i=2}^{r}\sum_{w\in I_i}d_T(u,w),\\
s_T(z,I(T))&=&\sum_{w\in I_1\setminus\{z\}}d_T(z,w)+d_T(z,u)+\sum_{i=2}^{r}\sum_{w\in I_i}(d_T(z,u)+d_T(u,w)),
\end{eqnarray*}
and so
\begin{eqnarray*}
&&s_T(z,I(T))-s_T(u,I(T))\\
&=&\sum_{w\in I_1\setminus\{v\}}(d_T(z,w)-d_T(u,w))+\sum_{i=2}^{r}\sum_{w\in I_i}d_T(z,u)\\
&\ge&-d_T(z,u)(a_1-1)+d_T(z,u)\sum_{i=2}^{r}a_i\\
&=&d_T(z,u)\left(-a_1+1+\sum_{i=2}^{r}a_i\right)\\
&>&0.
\end{eqnarray*}
Thus $s_T(z,I(T))>s_T(u,I(T))$, a contradiction.
\end{proof}

A tree is called starlike if it has at most one vertex of degree greater than $2$. So, a star and a path are both particular starlike trees.

A diametric path of a tree is a longest path in this tree (whose length equals the diameter). Evidently, the terminal vertices of a diametric path of any nontrivial tree are leaves.

A caterpillar is a tree such that the deletion of all leaves outside a diametric path (if any exists) yields a path. 

A leaf edge in a tree is an edge incident with a leaf.

For a tree $T$ with $uw\in E(T)$ and $vw\not\in E(T)$, if $T'=T-uw+vw$ is a tree, then we also say that
 $T'$ is obtained from $T$ by moving the edge $uw$ from $u$ to $v$.

A hanging path at a vertex $u$ of a tree $T$ is a path $uu_1\dots u_{\ell}$ with $\delta_T(u)\ge 3$, $\delta_T(u_{\ell})=1$ and if $\ell\ge 2$, $\delta_T(u_i)=2$ for $i=1, \dots, \ell-1$.

Let $P$ be a path in a tree $T$. For $v\in V(T)\setminus V(P)$, the distance between $v$ and $P$ is defined to be $d_T(v,P)=\min\{d_T(v, w): w\in V(P)\}$.

\section{Minimum leaf status}

\begin{Theorem} \label{lfmin}
Let $T$ be a tree of order $n$. Then
\[
ls(T)\ge n-1
\]
with equality if and only if $T$ is  starlike.
\end{Theorem}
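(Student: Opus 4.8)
The plan is to exploit the standard edge-decomposition of a status sum. For $u\in V(T)$ and an edge $e\in E(T)$, let $T_e^u$ denote the component of $T-e$ not containing $u$. Since $d_T(u,v)$ is the number of edges on the $u$--$v$ path, and $v$ lies on the far side of precisely those edges, we have
\[
s_T(u,L(T))=\sum_{v\in L(T)}d_T(u,v)=\sum_{e\in E(T)}\bigl|L(T)\cap V(T_e^u)\bigr|.
\]
Because $T$ has $n-1$ edges, the inequality $ls(T)\ge n-1$ will follow once we establish the sub-claim that \emph{every component of $T-e$ contains at least one leaf of $T$}. This sub-claim is the only place needing a short argument: if such a component is a single vertex $x$, then $x$'s unique $T$-neighbour is the other endpoint of $e$, so $\delta_T(x)=1$; otherwise the component $C$ is a tree on at least two vertices and has at least two vertices of degree $1$ in $C$, and any such vertex $w$ which is not an endpoint of $e$ has all of its $T$-incident edges inside $C$, whence $\delta_T(w)=1$. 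Substituting the sub-claim into the displayed identity gives $s_T(u,L(T))\ge n-1$ for every $u$, hence $ls(T)\ge n-1$.

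For the characterization of equality, I first check that a starlike tree attains the bound. If $T$ is a path, then $L(T)$ consists of its two endpoints, so taking $u$ to be one of them gives $s_T(u,L(T))=0+(n-1)=n-1$. If $T$ is starlike but not a path, it has a unique vertex $c$ with $\delta_T(c)\ge 3$; every vertex of $T$ other than $c$ has degree at most $2$, so each branch $B$ of $T$ at $c$ is a path, the unique leaf of $T$ in $B$ lies at distance $|V(B)|$ from $c$, and hence $s_T(c,L(T))=\sum_B|V(B)|=n-1$. In both cases $ls(T)\le n-1$, and combined with the inequality this gives $ls(T)=n-1$.

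Conversely, suppose $T$ is not starlike, so it has at least two vertices of degree at least $3$. Fix an arbitrary $u\in V(T)$ and choose such a vertex $w\ne u$; let $e$ be the edge of the $u$--$w$ path that is incident with $w$. Then $T_e^u$ is the subtree of $T$ hanging from $w$ away from $u$, and within $T_e^u$ the vertex $w$ has degree $\delta_T(w)-1\ge 2$, so $T_e^u-w$ has at least two components. Each such component, being the subtree hanging below the edge $f$ that joins it to $w$, is a component of $T-f$, hence contains a leaf of $T$ by the sub-claim; as these components are pairwise disjoint, the leaves so obtained are distinct, and therefore $\bigl|L(T)\cap V(T_e^u)\bigr|\ge 2$. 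Plugging into the displayed identity — the edge $e$ contributing at least $2$ and the remaining $n-2$ edges at least $1$ each — yields $s_T(u,L(T))\ge n$. Since $u$ was arbitrary, $ls(T)\ge n>n-1$, so $ls(T)=n-1$ can only hold when $T$ is starlike.

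The argument uses no deep idea; the one point deserving care is the sub-claim and its application in the converse, where one must confirm that the subtree hanging below a branching vertex $w\ne u$ genuinely contains two \emph{distinct} leaves of $T$, one drawn from each of two branches of $w$ pointing away from $u$. Everything else is immediate from the single displayed identity.
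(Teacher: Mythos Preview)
Your proof is correct and takes a genuinely different route from the paper's. The paper argues extremally: it fixes a tree $T$ minimizing $ls$, picks an $L(T)$-centroid vertex $u$ (using the branch-weight characterization of Lemma~\ref{newlemma}), and shows that any other vertex $v$ of degree $\ge 3$ would allow an edge-moving transformation $T\to T'$ with $ls(T')<ls(T)$, forcing $T$ to be starlike. You instead prove the pointwise inequality $s_T(u,L(T))\ge n-1$ for \emph{every} vertex $u$ via the edge-decomposition identity $s_T(u,L(T))=\sum_{e}|L(T)\cap V(T_e^u)|$, observing that each summand is at least $1$; and you handle equality by locating, for any $u$, a single edge $e$ near a branching vertex $w\ne u$ whose summand is $\ge 2$. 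Your approach is more elementary and self-contained --- it avoids the centroid lemma and any transformation machinery, and as a bonus it yields the stronger statement that $s_T(u,L(T))\ge n$ for every $u$ when $T$ is not starlike. The paper's transformation technique, on the other hand, is the workhorse for the harder extremal results later in the paper, so its appearance here is partly a warm-up.
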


\begin{proof}
Let $T$ be a tree of order $n$ that minimizes the minimum leaf status. Let $u$ be an $L(T)$-centroid vertex. Suppose that there is a vertex $v$, different from $u$, with degree at least $3$.  Denote by $v_0, v_1, \dots, v_{r-1}$ all the neighbors of $v$, where $v_0$ lies on the unique path connecting $u$ and $v$ in $T$, and $r=\delta_T(v)$.
Let
\[
T'=T-\{vv_i: i=1, \dots, r-2\}+\{uv_i: i=1, \dots, r-2\}.
\]
Let $V'$ be the set of leaves in the branch of $T$  at $u$ containing $v$, and $V''$ the set of leaves of all the branches of $T$ at $v$ containing one of $v_1, v_2,\dots, v_{r-2}$. Then
\begin{eqnarray*}
s_T(u, L(T))-s_{T'}(u, L(T')) &=& s_T(u, V')-s_{T'}(u, V')\\
&=& \sum_{w\in V''}(d_T(u,w)-d_{T'}(u,w))\\
&=& \sum_{w\in V''}d_T(u,v)\\
&=& |V''|d_T(u,v)\\
&>& 0,
\end{eqnarray*}
and so $ls(T)=s_T(u, L(T))>s_{T'}(u, L(T'))\ge ls(T')$, a contradiction. Thus, all vertices different from $u$ are of degree $1$ or $2$. That is, there is at most one vertex of degree at least $3$, or $T$ is starlike. By Lemma \ref{newlemma}, in a starlike tree $T$ of order $n$, the vertex of maximum degree is an $L(T)$-centroid vertex, and thus $ls(T)=|E(T)|=n-1$.
\end{proof}

For integers $n$, $a$ and $b$ with $2\le a, b\le \frac{n-2}{2}$, $T_{n;a,b}$ be the tree of order $n$ obtained from two stars $S_{a+1}$ and  $S_{b+1}$ by connecting their centers by a path of length $n-a-b-1$. For convenience, let $T_{n,a}=T_{n;a,a}$.

\begin{Theorem}
Let $T$ be a tree of order $n\ge 6$. Then
\[
ls(T)\le \left\lfloor\frac{(n+1)^2}{8}\right\rfloor
\]
with equality if and only if $T\cong T_{n, \lceil\frac{n}{4}\rceil}$  if $n$ is even or  $n\equiv 3~(\bmod~4)$, and $T\cong T_{n, \frac{n-1}{4}}, T_{n,\frac{n+3}{4}}$ if $n\equiv 1~(\bmod~4)$.
\end{Theorem}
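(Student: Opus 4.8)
The plan is to first reduce to a structurally simple family of trees and then optimize within it. I would begin by showing that a tree $T$ of order $n$ maximizing $ls(T)$ must be a caterpillar whose diametric path has its extra leaves attached only at the two penultimate vertices — that is, $T \cong T_{n;a,b}$ for suitable $a,b$. To do this, let $u$ be an $L(T)$-centroid vertex and work along the lines of the edge-moving arguments already used in the excerpt: if some branch at $u$ has internal structure that can be ``straightened'' (e.g. a vertex $v\ne u$ with a leaf hanging off a vertex other than along the spine), perform an edge move that pushes leaves farther from $u$, thereby strictly increasing $s_T(u,L(T))$ and hence $ls(T)$. One must be careful that after the move $u$ is still (or can be replaced by) an $L(T')$-centroid vertex; Lemma~\ref{newlemma}, which reduces centroid membership to the branch-weight condition $bw_T(u,A)\le |A|/2$, is the right tool for bookkeeping here. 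Iterating, the extremal tree has all leaves concentrated at the ends; a short separate argument (again an edge move, this time balancing the number of pendant leaves against the length of the connecting path) shows the two bunches of pendant leaves sit at the vertices adjacent to the endpoints of a diametric path, giving $T \cong T_{n;a,b}$ with $a,b\ge 2$.

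Next I would compute $ls(T_{n;a,b})$ explicitly. Write $p = n-a-b-1$ for the length of the path joining the two star-centers $c_1$ (with $a$ pendant leaves) and $c_2$ (with $b$ pendant leaves). By Lemma~\ref{newlemma}, an $L(T_{n;a,b})$-centroid vertex $w$ is exactly a vertex on the central path with $\max\{a+d(w,c_1)$-side count$,\dots\}\le (a+b)/2$; concretely $w$ is the (at most two) vertices on the $c_1$–$c_2$ path that best balance $a$ against $b$. For such $w$ one gets a closed form $s_{T_{n;a,b}}(w,L(T))$ that is, up to additive constants, a quadratic in the position of $w$ and in $a$ (with $b = n-1-a-p$ and $p$ also a free parameter). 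I expect the upshot to be that for fixed total leaf count $a+b$, decreasing $p$ (lengthening is bad — wait, one checks the sign carefully) and balancing $a=b$ is optimal, reducing the problem to maximizing a one-variable quadratic $f(a) = ls(T_{n;a,a})$ over $2\le a\le \frac{n-2}{2}$.

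Finally I would carry out the single-variable optimization. Substituting $b=a$ and $p = n-1-2a$, the formula for $ls(T_{n,a})$ becomes an explicit quadratic in $a$ whose maximum over the integers is attained near $a = n/4$; evaluating at the nearest admissible integers yields the value $\left\lfloor (n+1)^2/8\right\rfloor$ and pins down the optimal $a$ as $\lceil n/4\rceil$ in general, with the tie $a\in\{\frac{n-1}{4},\frac{n+3}{4}\}$ arising precisely when $n\equiv 1\pmod 4$ because then $n/4$ is equidistant from two integers after the floor is taken. The hypothesis $n\ge 6$ is what guarantees $a\ge 2$ so that $T_{n,a}$ is genuinely in the family (and not degenerate like a double star or star).

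The main obstacle, I expect, is the reduction step: proving rigorously that the maximizer is of the form $T_{n;a,b}$. Edge-moving arguments of the type in Theorem~\ref{lfmin} are delicate because moving an edge can shift the location of the $L(T)$-centroid, and one must ensure the inequality $s_T(u,L(T)) < s_{T'}(u',L(T'))$ holds for an appropriate centroid vertex $u'$ of the new tree rather than just for the old $u$. Handling all the cases (leaves hanging at interior spine vertices; branches that are longer paths rather than single edges; the interplay between $a$, $b$, and $p$) cleanly, while keeping track of which vertex is the centroid via Lemma~\ref{newlemma}, is where the real work lies; the subsequent computation and optimization are routine.
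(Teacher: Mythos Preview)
Your proposal is correct and follows essentially the same strategy as the paper: reduce via edge-moving arguments (with Lemma~\ref{newlemma} tracking centroid membership) to the family $T_{n,a}$, then maximize the resulting quadratic $f(a)=a(n+1-2a)$. The only organizational difference is that the paper bypasses the intermediate $T_{n;a,b}$ step and argues directly to the balanced case $a=b$ within its structural claims; also note that for fixed $n$ the path length $p=n-1-a-b$ is determined by $a+b$, so your aside about varying $p$ independently should be dropped.
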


\begin{proof}
Let $T$ be a tree of order $n$ that maximizes the minimum leaf status.

Let $x$ be an $L(T)$-centroid vertex.
Let $r=\delta_T(x)$ and $N_T(x)=\{x_1, \dots, x_r\}$. For $i=1, \dots, r$, let $B_i$ be the branch of $T$ at $x$ containing $x_i$ and  let $a_i=|L(T)\cap V(B_i)|$. Assume that $a_1\ge \dots\ge a_r$.

By Theorem~\ref{lfmin}, $T$ can not be a starlike tree and thus there are at least two vertices of degree at least three, and it is obvious that  one such vertex lies in some branch  of $T$ at $x$.

\noindent {\bf Claim 1.} If a branch  of $T$ at $x$ contains a vertex of degree at least three in $T$, then there is exactly one such vertex in this branch, and its neighbors in $T$ are all leaves except one lying  on the unique path connecting to $x$.

Assume that $B_1$ is a branch of $T$  at $x$ with a vertex of degree at least three in $T$.
Let $P=v_0v_1\dots v_p$ be a longest path in $T$ from a leaf of $T$ in $B_1$ to $x$, where $v_p=x$, $v_{p-1}=x_1$ and $p\ge 2$. Then $\delta_T(v_i)\ge 3$ for some $i=1, \dots, p-1$. Suppose that  $i>1$. Then  $p\ge 3$.
Let $T'$ be the tree obtained from $T$ by moving all edges outside $P$ incident with $v_i$ from $v_i$ to $v_{1}$. By Lemma~\ref{newlemma}, $x$ is an  $L(T')$-centroid vertex. It is evident that $ls(T')>ls(T)$, a contradiction. Thus,
$v_{1}$ is the only vertex in $B_1$ with degree at least three in $T$  and all its  neighbors are leaves of $T$  except  the one lying on the path connecting to $x$. This proves Claim 1.

\noindent {\bf Claim 2.}   $T$ is obtainable by connecting the centers of two copies of $S_{a_1+1}$ by a path, where $a_1\ge 2$.

Suppose first that $r=2$. Then, by Lemma~\ref{newlemma},  both $B_1$ and $B_2$ contain a vertex of degree at least three in $T$, and by Claim 1, there is exactly one vertex of degree at least three in each of $B_1$ and $B_2$, and its neighbors in $T$ are all leaves except one lying  on the path connecting to $x$. So, $T$ is obtainable by connecting the centers of two stars, say  $S_{a_1+1}$ and
$S_{a_2+1}$, by a path, where $a_1\ge a_2\ge 2$. Suppose that $a_1>a_2$. Let $z$ be the vertex in $B_1$ with degree $a_1+1$ in $T$. Then $bw_T(x,L(T))=a_1>a_2=bw_T(z,L(T))$. So $x$ is not an $L(T)$-centroid vertex by Lemma~\ref{oldLem}, a contradiction. Thus  $a_1=a_2$.

Suppose next that $r\ge 3$.
Assume that $B_1$ is a branch of $T$ at $x$ containing a vertex of degree at least three in $T$. By Claim 1, there is exactly one such vertex $v_1$, and the neighbors of $v_1$ in $T$ are all leaves except the one lying  on the path between $v_1$ and $x$. Let $z\in L(T)\cap V(B_r)$, $k=d_T(x,z)$ and $\ell=d_T(x, v_1)$.
By Lemma~\ref{oldLem}, $a_1=bw_T(x,L(T))\le bw_T(v_1,L(T))=\sum_{i=2}^{r}a_i$.

Suppose that $a_1<\sum_{i=2}^{r}a_i$.
Suppose first that  $a_r=1$.
Let
\[
T'=T-xx_1+x_1z.
\]
Note that $\sum_{i=1}^{r}a_i=|L(T)|$. As $a_1<\sum_{i=2}^{r}a_i$, we have $2a_1<|L(T)|$, so $a_1\le \frac{|L(T)|-1}{2}=\frac{|L(T')|}{2}$. Then, by Lemma~\ref{newlemma}, $x$ is an $L(T')$-centroid vertex. Thus
\begin{eqnarray*}
ls(T')-ls(T)&=&s_{T'}(x, L(T'))-s_{T}(x, L(T))\\
&=&(k+\ell+1)a_1-k-(\ell+1)a_1\\
&=& (a_1-1)k\\
&>&0.
\end{eqnarray*}
That is, $ls(T')>ls(T)$, a contradiction.
Suppose next that  $a_r\ge 2$. Let $z_1, \dots, z_{a_r}$ be the leaves of $T$ in $B_r$, where $z_1=z$. By Claim 1, these leaves are adjacent to a common vertex, say $v_r$.
We consider $a_1\le\sum_{i=2}^{r-1}a_i$ and $a_1>\sum_{i=2}^{r-1}a_i$ separately.
In the former case, let
\[
T'=T-xx_1-\{v_rz_i: i=2, \dots, a_r\}+\{z_iz_{i+1}: i=1, \dots, a_r-1\}+x_1z_{a_r}.
\]
As $a_1\le \frac{|L(T)|-a_r}{2}=\frac{|L(T)'|}{2}$, $x$ is an $L(T')$-centroid vertex by Lemma~\ref{newlemma}, so
\[
ls(T')-ls(T)=(k+a_r-1+\ell+1)a_1-ka_r-(\ell+1)a_1=(k+a_r-1)a_1-ka_{r}>0,
\]
i.e.,   $ls(T')>ls(T)$,  a contradiction.
In the latter case,
let
\[
T'=T-\{xx_i:i=2,\dots,r-1\}+\{x_rx_i:i=2,\dots,r-1\}.
 \]
By Lemma~\ref{newlemma}, $x_r$ is an $L(T')$-centroid vertex. Then $ls(T')-ls(T)=a_1-a_r>0$, also a contradiction.
Therefore,  $a_1=\sum_{i=2}^{r}a_i$.

If $|V(B_i)|>1$ for some $2\le i\le r$, then for the tree
\[
T''=T-\{xx_{j}: j=2,\dots, r \mbox{ with }  j\ne i\}+\{x_ix_{j}: j=2,\dots, r \mbox{ with }   j\ne i\},
\]
as $x$ is an $L(T'')$-centroid vertex by Lemma~\ref{newlemma}, we have  $ls(T'')>ls(T)$. This contradiction shows that $|V(B_2)|=\cdots=|V(B_r)|=1$. This proves Claim 2.

By Claim 2, $T\cong T_{n, a_1}$. Then the diameter of $T$ is $d=n+1-2a_1$, and any  internal vertex is an $L(T)$-centroid vertex. Let $v$ be the neighbor of some leaf.   So
\[
ls(T)=s_T(v,L(T))=a_1+a_1(d-1)=a_1d=a_1(n+1-2a_1):=f(a_1),
\]
which is strictly increasing when $a_1\le \frac{n+1}{4}$ and strictly decreasing when $a_1\ge \frac{n+1}{4}$.
If $n\equiv 3~(\bmod~4)$, then $f(a_1)\le f\left(\frac{n+1}{4}\right)=\frac{(n+1)^2}{8}=\left\lfloor\frac{(n+1)^2}{8}\right\rfloor$ with equality if and only if $a_1=\frac{n+1}{4}$.
If  $n\equiv 2~(\bmod~4)$, then  $f(a_1)\le f\left(\frac{n+2}{4}\right)=\frac{n(n+2)}{8}=\left\lfloor\frac{(n+1)^2}{8}\right\rfloor$ with equality if and only if $a_1=\frac{n+2}{4}$.
If $n\equiv 0~(\bmod~4)$, then $f(a_1)\le f\left(\frac{n}{4}\right)=\frac{n(n+2)}{8}=\left\lfloor\frac{(n+1)^2}{8}\right\rfloor$ with equality if and only if $a_1=\frac{n}{4}$.
If $n\equiv 1~(\bmod~4)$, then $f(a_1)\le f\left(\frac{n-1}{4}\right)=f\left(\frac{n+3}{4}\right)=\frac{(n-1)(n+3)}{8}=\left\lfloor\frac{(n+1)^2}{8}\right\rfloor$
 with equality if and only if $a_1=\frac{n-1}{4}$ or $\frac{n+3}{4}$. So,
we have $T\cong T_{n, a}$, where
 \[
a=\begin{cases} \lceil\frac{n}{4}\rceil  & \mbox{if $n$ is even or  $n\equiv 3~(\bmod~4)$}\\[2mm]
\frac{n-1}{4}, \frac{n+3}{4} & \mbox{if $n\equiv 1~(\bmod~4$})
\end{cases}
\]
and $ls(T)=\left\lfloor\frac{(n+1)^2}{8}\right\rfloor$.
\end{proof}

\begin{Theorem} Let $T$ be a tree of order $n$ with diameter $d$, where $3\le d\le n-1$. Then
\[
ls(T)\le\begin{cases} \frac{(n-d+1)d}{2}  & \mbox{if $n-d$ is odd}\\[2mm]
\frac{(n-d)d}{2}+1 & \mbox{if $n-d$ is even}
\end{cases}
\]
with equality if and only if $T\cong T_{n,\frac{n-d+1}{2}}$ when $n-d$ is odd,  and $T\cong T_{n;\frac{n-d}{2},\frac{n-d+2}{2}}$, or $d\ge 4$ and $T$ is isomorphic to a tree
obtained from
$T_{n-1,\frac{n-d}{2}}$ by adding a leaf edge at a vertex of degree two when $n-d$ is even.
\end{Theorem}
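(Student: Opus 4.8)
I would prove this by combining a short ``midpoint of the diametric path'' estimate with an exact centroid computation on caterpillars. Fix a diametric path $P=v_0v_1\cdots v_d$ of $T$. Then $v_0,v_d\in L(T)$ and $v_1,\dots,v_{d-1}\in I(T)$, so $|L(T)|\le n-d+1$, with equality exactly when $I(T)=\{v_1,\dots,v_{d-1}\}$, i.e.\ when every vertex of $T$ off $P$ is a leaf adjacent to some $v_i$ — equivalently, when $T$ is a caterpillar. I would organise the argument around this dichotomy.

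First I would establish the universal bound $ls(T)\le\frac{|L(T)|\,d}{2}$ for any tree of diameter $d$. For a leaf $\ell$ let $v_j$ be the vertex of $P$ nearest to $\ell$ and $h=d_T(\ell,P)$; then $d_T(v_t,\ell)=|t-j|+h$ for every $t$, and diametrality of $P$ forces $h\le\min(j,d-j)$. With $a=\lfloor d/2\rfloor$, $b=\lceil d/2\rceil$ (so $a+b=d$), a one‑line check gives $d_T(v_a,\ell)+d_T(v_b,\ell)=|d-2j|+2h\le d$ for every leaf $\ell$. Summing over $L(T)$ yields $s_T(v_a,L(T))+s_T(v_b,L(T))\le|L(T)|\,d$, hence $ls(T)\le\tfrac12\big(s_T(v_a,L(T))+s_T(v_b,L(T))\big)\le\frac{|L(T)|\,d}{2}$. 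Since $\frac{(n-d)d}{2}$ is strictly below both $\frac{(n-d+1)d}{2}$ and $\frac{(n-d)d}{2}+1$, any tree with $|L(T)|\le n-d$ — in particular any non‑caterpillar — fails to be extremal, so an extremal $T$ is a caterpillar with $|L(T)|=n-d+1$.

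It then remains to treat caterpillars. For a caterpillar the internal vertices are exactly $v_1,\dots,v_{d-1}$, so for $n\ge3$ an $L(T)$‑centroid vertex is some $v_k$ with $1\le k\le d-1$. Writing $m_i$ for the number of leaves attached to $v_i$ one gets directly
\[
ls(T)=s_T(v_k,L(T))=n-1+\sum_{i=1}^{d-1}m_i\,|k-i|.
\]
Bounding $|k-i|\le k-1$ for $i<k$ and $|k-i|\le d-1-k$ for $i>k$, and setting $A=1+\sum_{i<k}m_i$, $B=1+\sum_{i>k}m_i$ (the leaf‑counts of the two $P$‑branches at $v_k$, so $A+B+m_k=n-d+1$), Lemma~\ref{newlemma} gives $A,B\le\lfloor\frac{n-d+1}{2}\rfloor$, whence
\[
\sum_{i=1}^{d-1}m_i|k-i|\le(k-1)(A-1)+(d-1-k)(B-1)\le(d-2)\!\left(\left\lfloor\tfrac{n-d+1}{2}\right\rfloor-1\right).
\]
A routine simplification of $n-1+(d-2)\big(\lfloor\frac{n-d+1}{2}\rfloor-1\big)$ gives $\frac{(n-d+1)d}{2}$ when $n-d$ is odd and $\frac{(n-d)d}{2}+1$ when $n-d$ is even, proving the stated bounds. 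For equality, the chain forces $m_i=0$ for $i\notin\{1,k,d-1\}$, forces $A=\lfloor\frac{n-d+1}{2}\rfloor$ when $k\ge2$ and $B=\lfloor\frac{n-d+1}{2}\rfloor$ when $k\le d-2$; combined with $A+B+m_k=n-d+1$ this pins down $(m_1,m_k,m_{d-1})$. When $n-d$ is odd it yields $m_1=m_{d-1}=\frac{n-d-1}{2}$ and $m_k=0$ (for every admissible $k$), i.e.\ $T\cong T_{n,\frac{n-d+1}{2}}$; when $n-d$ is even, $k\in\{1,d-1\}$ yields $\{m_1+1,m_{d-1}+1\}=\{\frac{n-d}{2},\frac{n-d+2}{2}\}$, i.e.\ $T\cong T_{n;\frac{n-d}{2},\frac{n-d+2}{2}}$, while $2\le k\le d-2$ (possible only for $d\ge4$) yields $m_1=m_{d-1}=\frac{n-d}{2}-1$ and $m_k=1$, i.e.\ $T$ is obtained from $T_{n-1,\frac{n-d}{2}}$ by attaching a leaf at the degree‑two vertex $v_k$. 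One then verifies conversely that each candidate tree has diameter $d$, has $v_k$ as an $L(T)$‑centroid, and attains the bound, and that the endpoint $d=n-1$ ($T=P_n$) is the case $T_{n,\frac{n-d+1}{2}}$ with parameter $1$.

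The step I expect to be delicate is the equality analysis, not the inequality. The centroid index $k$ is only partly determined, and at the endpoints $k=1$ and $k=d-1$ the coefficients $k-1$ and $d-1-k$ vanish, so the constraints forced on $A$ and $B$ weaken there; one must carefully split into $k=1$, $k=d-1$, and $2\le k\le d-2$ (and into $n-d$ even/odd), and — crucially — check that each candidate tree really does have $v_k$ as an $L(T)$‑centroid, since otherwise the identity $ls(T)=n-1+\sum m_i|k-i|$ would have been applied at the wrong vertex. Managing this bookkeeping cleanly, and matching the outcomes with the $T_{n;a,b}$ notation, is where the real work lies.
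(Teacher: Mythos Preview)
Your argument is correct and is genuinely different from the paper's. The paper proceeds by extremality: it picks a maximiser $T$, fixes an $L(T)$-centroid $x$, and through a long sequence of edge-moving transformations (Claims~1 and~2, each with several subcases) forces first that $x$ lies on a diametric path and then that $T$ is a caterpillar; only after all that does it compute. You bypass the transformations entirely. Your midpoint estimate $ls(T)\le\tfrac12\bigl(s_T(v_a,L(T))+s_T(v_b,L(T))\bigr)\le\frac{|L(T)|\,d}{2}$, together with $|L(T)|\le n-d$ for non-caterpillars, disposes of the non-caterpillar case in one stroke, and your caterpillar analysis via the explicit formula $s_T(v_k,L(T))=n-1+\sum_i m_i|k-i|$ bounded against the branch-weight constraint $A,B\le\lfloor\tfrac{n-d+1}{2}\rfloor$ from Lemma~\ref{newlemma} is both shorter and more transparent than the paper's endgame. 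What the paper's approach buys is a template that might adapt to other invariants where no clean midpoint bound is available; what your approach buys is a self-contained proof of perhaps a quarter the length, with the equality analysis falling out of the same chain of inequalities rather than from separate structural claims.

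One small point worth tightening when you write it out: your sentence ``an $L(T)$-centroid vertex is some $v_k$ with $1\le k\le d-1$'' should be phrased as ``we may choose an $L(T)$-centroid vertex $v_k$ with $1\le k\le d-1$'', since in the degenerate case $T=P_n$ the endpoints are also centroids. This does not affect the argument, as you only need the existence of an internal centroid, which Lemma~\ref{newlemma} guarantees. Your own caveat about the equality bookkeeping at $k\in\{1,d-1\}$ is well placed but not actually problematic: the vanishing coefficients simply drop the constraint on $A$ (resp.\ $B$), and the remaining constraint together with $A+B+m_k=n-d+1$ still pins down the tree, exactly as you sketch.
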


\begin{proof} If $d=n-1$, then the result is trivial as $n-d=1$ is odd,  $T\cong P_n$ and $ls(T)=n-1=\frac{(n-d+1)d}{2}$.

Suppose that $d\le n-2$.
Let $T$ be a tree of order $n$  with diameter $d$ that maximizes the minimum leaf status. Let $x$ be an $L(T)$-centroid vertex. Let $r=\delta_T(x)$ and $N_T(x)=\{y_1, \dots, y_{r}\}$. For $i=1, \dots, r$, let $B_i$ be the branch of $T$ at $x$ containing $y_i$ and let $a_i=|L(T)\cap V(B_i)|$.

Suppose that $r=2$.  By Lemma~\ref{newlemma}, $a_1, a_2\le\frac{|L(T)|}{2}$. So $a_1= a_2=\frac{|L(T)|}{2}$. Choose a vertex $w\in V(B_{1})$ with $\delta_T(w)\ge 3$ such that $d_T(x,w)$ is as small as possible. Then $bw_T(w,L(T))=a_2=\frac{|L(T)|}{2}$. So $w$ is an $L(T)$-centroid vertex by Lemma~\ref{newlemma}. Therefore, we may assume that $r\ge 3$. Let $P:=x_0x_1\dots x_d$ be an arbitrary diametric path of $T$.

\noindent {\bf Claim 1.} $x$ is in some diametric path.

Suppose this is not true. That is, $x$ lies outside any diametric path.  Then, for some $i$ with $3\le i\le d-3$, one has
$d_T(x, P)=d_T(x,x_i)\le\min\{i-2,d-i-2\}$. Assume that $x_i\in B_r$. Suppose that  $a_j>a_r$ for some $j$ with $1\le j\le r-1$.
Let $T'$ be the tree obtained from $T$ by moving the edges $xy_k$ with $1\le k\le r-1$ and $k\neq j$
from $x$ to $y_r$. Note that the diameter of $T'$ is $d$ and $L(T')=L(T)$. As $a_j>a_r$, we have $bw_{T'}(y_r)=\max\{a_1, \dots, a_{r-1}\}=bw_{T}(x)$. 
By Lemma~\ref{newlemma},  $y_r$ is an $L(T')$-centroid vertex. So $ls(T)=s_T(x,L(T))<s_T(x,L(T))+a_j-a_r=s_{T'}(y_r,L(T'))=ls(T')$. This contradiction shows that $a_j\le a_r$ for $1\le j\le r-1$.

Suppose that, for some $1\le j\le r-1$, $a_j<a_r$, and either $\delta_T(y_j)\ge 2$ or $\delta_T(y_j)=1$ and $a_r<\frac{|L(T)|}{2}$.

Let $T''$ be the tree obtained from $T$ by moving the edges $xy_k$ with $1\le k\le r-1$ and $k\ne j$ from $x$ to $y_j$.  Evidently,  the diameter of $T''$ is $d$. Let $T_0$ be the maximal subtree of $T-xy_j$ containing $y_j$.
Note that the branches of $T''-y_j$ are  $B_k$ with $1\le k\le r-1$ and $k\ne j$, $T[\{x\}\cup V(B_r)]$, and if $\delta_T(y_j)\ge 2$, the branches of $T_0$ at $y_j$.
As  $a_k\le a_r$ for $1\le k\le r-1$,
$bw_{T''}(y_j)=\max\{a_k: k=1, \dots, r, k\ne j\}
=a_r$.

If $\delta_T(y_j)\ge 2$, then
 $bw_{T''}(y_j)
=a_r=bw_{T}(x)\le\frac{|L(T)|}{2}=\frac{|L(T'')|}{2}$ by Lemma~\ref{newlemma}.
If $\delta_T(y_j)=1$ and $a_r<\frac{|L(T)|}{2}$, then
$bw_{T''}(y_j)
=a_r\le\frac{|L(T)|-1}{2}=\frac{|L(T'')|}{2}$. In either case,   $y_j$ is an $L(T'')$-centroid vertex by Lemma~\ref{newlemma}. So $ls(T)=s_T(x,L(T))<s_T(x,L(T))+a_r-a_j=s_{T''}(y_j,L(T''))=ls(T'')$, a contradiction. Therefore $a_1=\dots=a_r$, or $\delta_T(y_j)=1$ for $1\le j\le r-1$ and $a_r=\frac{|L(T)|}{2}$. We show both cases are impossible.

\noindent {\bf Case 1.} $\delta_T(y_j)=1$ for $1\le j\le r-1$ and $a_r=\frac{|L(T)|}{2}$.

Obviously, $r-1=\frac{|L(T)|}{2}$. Let $w$ be a vertex in $B_r$ with $\delta_T(w)\ge 3$ such that $d_T(w,x)$ is as small as possible. Note that $bw_T(w)=\frac{|L(T)|}{2}$. So $w$ is an $L(T)$-centroid vertex by Lemma~\ref{newlemma}. Suppose that  $w\neq x_i$. Let $w'$ be the neighbor of $w$
in the branch of $T$ at $w$ containing $x_i$. Let $T'$ be the tree obtained from $T$ by moving edges incident with $w$ outside the path connecting $x$ and $x_i$ from $w$ to $w'$. Then $bw_{T'}(w')=\frac{|L(T)|}{2}=\frac{|L(T')|}{2}$. Thus $w'$ is an $L(T')$-centroid vertex by Lemma~\ref{newlemma}. It is evident that $s_{T'}(w',L(T'))>s_{T}(w,L(T))$, implying that
$ls(T')=s_{T'}(w',L(T'))>s_{T}(w,L(T))=ls(T)$, a contradiction. It thus follows that $w=x_i$. Let $T''=T-xy_1+x_1y_1$. As $bw_{T''}(x_i)\le\frac{|L(T'')|}{2}$, $x_i$ is an $L(T'')$-centroid vertex by Lemma~\ref{newlemma}. Then $ls(T'')=s_{T''}(x_i,L(T''))>s_{T}(x_i,L(T))=s_{T}(x,L(T))=ls(T)$, a contradiction. So Case 1 can not occur.

\noindent {\bf Case 2.} $a_1=\dots=a_r$.

Let $w$ be a vertex in $B_r$ with $\delta_T(w)\ge 3$ such that $d_T(x,w)$ is as small as possible. Let $T'=T-\{xy_j:2\le j\le r-1\}+\{wy_j:2\le j\le r-1\}$. Note that $bw_{T'}(w)=bw_{T}(x)=a_1<\frac{|L(T)|}{2}=\frac{|L(T')|}{2}$. So $w$ is an $L(T')$-centroid vertex by Lemma~\ref{newlemma}.
Thus  $ls(T')-ls(T)=s_{T'}(w,L(T'))-s_{T}(x,L(T))=a_1d_T(x,w)-a_rd_T(x,w)=0$, i.e.,
$ls(T')=ls(T)$.

Suppose that $w\neq x_i$.  Let $t= \delta_{T'}(w)$ and $N_{T'}(w)=\{w_1,\dots,w_t\}$. For $j=1,\dots, t$, let $B'_j$ be the branch of $T$ at $w$ containing $w_j$ and let $b_j=|L(T')\cap V(B'_j)|$. Assume that $x\in V(B'_1)$ and $x_i\in V(B'_t)$. Let $T''=T'-\{ww_i:2\le j\le t-1\}+\{w_tw_i:2\le j\le t-1\}$. Obviously, $bw_{T''}(w_t)=b_1=a_1$, so $w_t$ is an $L(T'')$-centroid vertex by Lemma~\ref{newlemma}. Note also that $b_1=a_1=a_r>b_t$. Then $ls(T'')=s_{T''}(w_t,L(T''))=s_{T'}(w,L(T'))+b_1-b_t>s_{T'}(w,L(T'))=ls(T')=ls(T)$, a contradiction.
Thus $w=x_i$. Let $n_1$ ($n_2$, respectively) be the number of leaves in the branch of $T'$  at $x_i$ containing $x_{i-1}$ ($x_{i+1}$, respectively).
Let $z_1$ be the neighbor of $x_i$ in the path connecting $x_i$ and $x$ in $T'$.
Assume that $n_1\ge n_2$. Suppose that $n_1>n_2$.
Denote by  $T_1$ the tree obtained from $T'$ by moving the edges $x_iz_1, x_iy_2, \dots, x_iy_{r-1}$ from $x_i$ to $x_{i+1}$. By Lemma~\ref{newlemma}, $bw_{T_1}(x_{i+1})=a_1=bw_{T'}(x_{i})\le\frac{|L(T')|}{2}=\frac{|L(T_1)|}{2}$, and so $x_{i+1}$ is an $L(T_1)$-centroid vertex. Thus $ls(T_1)=s_{T_1}(x_{i+1},L(T_1))\ge s_{T'}(x_i,L(T'))+n_1-n_2>s_{T'}(x_i,L(T'))=ls(T')=ls(T)$. This contradiction shows that $n_1=n_2$.
Let $T_2=T'-\{x_iz_1, x_iy_2\}+\{x_{i-1}z_1, x_{i+1}y_2\}$. Note that there are $n_1+a_1$ leaves in the branch of $T_2$ at $x_i$ containing $x_0$ and $n_2+a_2$ ($=n_1+a_1$) leaves in the branch of $T_2$ at $x_i$ containing $x_d$. Thus $bw_{T_2}(x_i)=a_1+n_1\le \frac{|L(T_2)|}{2}$, implying that $x_i$ is an $L(T_2)$-centroid vertex by Lemma~\ref{newlemma}. It follows that $ls(T_2)=s_{T_2}(x_i,L(T_2))=s_{T'}(x_i,L(T'))+2a_1>s_{T'}(x_i,L(T'))=ls(T')$, also a contradiction.
So Case 2 can not occur.

Now Claim 1 follows by combining the above two cases.

By Claim $1$, we may assume that  $x_i=x$, where $1\le i\le d-1$. Denote by $a$ the number of leaves in the branch of $T$ at $x_i$ containing $x_{i-1}$.

\noindent {\bf Claim 2.} $T$ is a caterpillar.

Suppose that  $T$ is not a caterpillar. Then $d(w,P)\ge 2$ for some leaf $w$ of $T$ outside $P$. Let $u$ be the unique neighbor vertex of $w$. Assume that  $d(w,x_j)=d(w,P)$ and $j\le i$.   We want to show that $j=i$. Suppose that $j<i$. Choose $j$ such that $i-j$ is as small as possible. If $j<i-1$ and $\delta_T(x_k)\ge 3$ for some $k$ with $j<k<i$, then by moving the leaf edges at $x_k$ to $x_1$, we get a tree $T'$ for which $x$ is still an $L(T')$-centroid by Lemma~\ref{newlemma}, so $ls(T')>ls(T)$, which is a contradiction.
Thus  $\delta_T(x_{j+1})=\dots=\delta_T(x_{i-1})=2$ if $j<i-1$.
Denote by $T'$ the tree obtained from $T$ by moving all the leaf edges at $u$ from $u$ to $x_1$.
If $x_i$ is also an $L(T')$-centroid vertex, then, as $d_T(x_1,x_i)\ge d_T(u,x_i)$ and $u\in L(T')$, we have $ls(T')=s_{T'}(x_i,L(T'))>s_{T}(x_i,L(T))=ls(T)$, a contradiction. Thus $x_i$ is not an $L(T')$-centroid vertex. Note that the branches of $T'$ at $x_i$ containing no $x_j$ are just the branches of $T$ at $x_i$ not containing $x_j$, and that $x_i$ is an $L(T)$-centroid vertex.  By Lemma~\ref{newlemma},  $a +1=bw_{T'}(x)>\frac{|L(T')|}{2}$ and  $a \le \frac{|L(T)|}{2}$. So $\frac{|L(T)|+1}{2}=\frac{|L(T')|}{2}< a +1\le \frac{|L(T)|}{2}+1$, i.e., $a =\frac{|L(T)|}{2}$. Then $bw_T(x_j,L(T))=bw_{T'}(x_j,L(T'))=|L(T)|-a=\frac{|L(T)|}{2}$, so  $x_j$ is an $L(T)$-centroid vertex and also an $L(T')$-centroid vertex  by Lemma~\ref{newlemma}. Thus  $ls(T')=s_{T'}(x_j,L(T'))>s_{T}(x_j,L(T))=ls(T)$, also a contradiction. This shows that  $j=i$.

By the choice of $T$, each internal vertex of $T$ on $P$  different from $x_1,x_{d-1},x_i$ has degree two. Otherwise, as above, by moving the leaf edges from  these vertices to  $x_1$ or $x_{d-1}$ would result in a tree with larger minimum leaf status.
Note that the number of leaves in the branch of $T$ at $x_{i}$ containing  $x_{i-1}$ is $a$.
Let $p=\delta_T(u)-1$.  Let $k$ be the number of leaves in the branch of $T$ at $x_i$ containing  $w$.
Let $n_0$ be the maximum  number of leaves in a branch of $T$ at $x_i$ containing no $x_{i-1}$ and $w$.
Suppose that $a +p\le \frac{|L(T)|+1}{2}$. Let $T'$ be the tree obtained from $T$ by moving the leaf edges at $u$ from $u$ to $x_1$.  By Lemma~\ref{newlemma}, $bw_{T'}(x_i)=\max\{a +p,k-p+1,n_0\}\le\max\left\{a +p,\frac{|L(T)|}{2}\right\}\le\frac{|L(T)|+1}{2}$, and so $x_i$ is an $L(T')$-centroid vertex. Then we have $ls(T')=s_{T'}(x_i,L(T'))>s_{T}(x_i,L(T))=ls(T)$, a contradiction. It thus follows that  $a +p>\frac{|L(T)|+1}{2}$, i.e., $p>\frac{|L(T)|+1}{2}-a $.  We form a tree $T''$ by  moving $\left\lfloor\frac{|L(T)|+1}{2}\right\rfloor-a$ leaf edges at $u$ from $u$ to $x_1$ and the remaining leaf edges at $u$ from $u$ to $x_{d-1}$.
Note that $bw_{T''}(x_i)=\left\lfloor\frac{|L(T)|+1}{2}\right\rfloor\le \frac{|L(T'')|}{2}$. Thus $x_i$ is also an $L(T'')$-centroid vertex by Lemma~\ref{newlemma}. So $ls(T'')=s_{T_2}(x_i,L(T_2))>s_{T}(x_i,L(T))=ls(T)$, a contradiction. This completes the proof  Claim 2.

By Claim 2, $T$ is a caterpillar. Then  $|L(T)|=n-d+1$.
Note that  $a =\delta_T(x_1)-1$. Let $b=\delta_T(x_{d-1})-1$.

\noindent{\bf Case 1.}  $n-d$ is odd.

We want to show $i=1$ or $d-1$.  Suppose that this is not true.
  By the choice of $T$, each vertex from $x_2, \dots, x_{d-2}$ different from $x_i$ has degree two in $T$.
 Note that the number of leaves at $x_i$ in $T$ is $n-d+1-a -b$.
By Lemma~\ref{newlemma},  $a, b\le \frac{n-d+1}{2}$.
Suppose that  $a <\frac{n-d+1}{2}$.
We form a tree $T'$ by  moving $\frac{n-d+1}{2}-a $ leaf edges at $x_i$ from $x_i$ to $x_1$.
Evidently,  $ls(T')=s_{T'}(x_i,L(T'))>s_{T}(x_i,L(T))=ls(T)$, a contradiction. So $a =\frac{n-d+1}{2}$. Similarly, $b=\frac{n-d+1}{2}$.
 Thus $T\cong T_{n,\frac{n-d+1}{2}}$ and $ls(T)=\frac{(n-d+1)d}{2}$.

\noindent{\bf Case 2.}  $n-d$ is even.

If $i=1$ or $d-1$, then with a similar argument as in Case 1, we have $T\cong T_{n;\frac{n-d}{2},\frac{n-d+2}{2}}$ and $ls(T)=\frac{(n-d)d}{2}+1$. Suppose that $i\neq 1, d-1$.  Then $d\ge 4$. By the choice of $T$, each vertex from $x_2, \dots, x_{d-2}$ different from $x_i$ has degree two in $T$.
By Lemma~\ref{newlemma}, $a, b\le\frac{n-d}{2}$.  Suppose that $a <\frac{n-d}{2}$.
Then a tree $T'$ can be formed by moving $\frac{n-d}{2}-a $ leaf edges at $x_i$ from $x_i$ to $x_1$. 
Thus $ls(T')=s_{T'}(x_i,L(T'))>s_{T}(x_i,L(T))=ls(T)$, a contradiction. Thus $a =\frac{n-d}{2}$. Similarly, $b=\frac{n-d}{2}$. Thus  $T$ is isomorphic to a tree obtained from
$T_{n-1,\frac{n-d}{2}}$ by adding a leaf edge at a vertex of degree two,
 $ls(T)=\frac{(n-d)d}{2}+1$.

 The result follows by combining the above two cases.
\end{proof}

\section{Maximum leaf status}

\begin{Theorem}\label{LSmin}
Let $T$ be a tree of order $n$. Then
\[
LS(T)\ge n-1
\]
with equality if and only if $T\cong P_n$.
\end{Theorem}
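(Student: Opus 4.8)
The plan is to exploit the standard \emph{edge decomposition} of distances. For a vertex $u$ of $T$ and an edge $e$ of $T$, the unique $u$--$w$ path uses $e$ precisely when $u$ and $w$ lie in different components of $T-e$; summing over all leaves $w$ yields
\[
s_T(u,L(T))=\sum_{e\in E(T)}\ell_u(e),
\]
where $\ell_u(e)$ is the number of leaves of $T$ contained in the component of $T-e$ that does \emph{not} contain $u$. So it suffices to bound each $\ell_u(e)$ from below.

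The key observation I would isolate is that $\ell_u(e)\ge 1$ for every edge $e$ and every vertex $u$. Indeed, let $C$ be the component of $T-e$ avoiding $u$ and let $x$ be the endpoint of $e$ lying in $C$. Choose $z\in V(C)$ at maximum distance from $x$: if $z\ne x$, then $z$ has a neighbour toward $x$ and any further neighbour would be strictly farther from $x$ (and $z$ has no neighbour outside $C$ since $z\ne x$), while if $z=x$ then $x$'s only neighbour is across $e$; in either case $\delta_T(z)=1$, so $C$ contains a leaf of $T$. Since $|E(T)|=n-1$, this gives $s_T(u,L(T))=\sum_{e\in E(T)}\ell_u(e)\ge n-1$ for \emph{every} $u\in V(T)$, and in particular $LS(T)\ge n-1$.

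For the equality characterization: if $T\cong P_n$ then, as already noted, every vertex has leaf status $n-1$, so $LS(P_n)=n-1$. Conversely, suppose $LS(T)=n-1$. Fix any leaf $u$ and let $v$ be its neighbour. Then $n-1\le s_T(u,L(T))\le LS(T)=n-1$, so $\sum_{e\in E(T)}\ell_u(e)=n-1$ with each summand at least $1$, forcing $\ell_u(e)=1$ for every edge $e$. Applying this to $e=uv$: the component of $T-uv$ not containing $u$ is $T-u$, which contains exactly the leaves of $T$ other than $u$, so $|L(T)|-1=\ell_u(uv)=1$, i.e.\ $|L(T)|=2$. A tree on $n\ge 2$ vertices with exactly two leaves is a path, hence $T\cong P_n$.

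I do not expect a serious obstacle; the only points needing care are the bookkeeping in the edge-decomposition identity and the elementary fact that any subtree of $T$ obtained by deleting a single edge contains a leaf of $T$. One could alternatively invoke Lemma~\ref{peripherian} to reduce to a peripherian leaf first, but the argument above works verbatim for an arbitrary vertex, so this reduction is unnecessary.
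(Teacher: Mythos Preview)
Your proof is correct and takes a genuinely different route from the paper's.

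The paper first invokes Lemma~\ref{peripherian} to locate a leaf peripherian vertex $x$ at a leaf (when $T$ is not a path), then finds the nearest vertex $u$ of degree $\ge 3$ to $x$ and bounds $s_T(x,L(T))$ from below by $d_T(x,u)(|L(T)|-1)+(n-1-d_T(x,u))$, obtaining a strict inequality $LS(T)>n-1$. Your argument instead uses the edge decomposition $s_T(u,L(T))=\sum_{e}\ell_u(e)$ together with the elementary fact that each component of $T-e$ contains a leaf of $T$, yielding $s_T(u,L(T))\ge n-1$ for \emph{every} vertex $u$; the equality case then falls out by reading off $\ell_u(uv)=|L(T)|-1$ at a leaf edge. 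Your approach is shorter, avoids Lemma~\ref{peripherian} entirely, and as a bonus reproves the inequality part of Theorem~\ref{lfmin} (that $ls(T)\ge n-1$) along the way. The paper's computation, on the other hand, gives a slightly sharper quantitative bound $LS(T)\ge n-1+|L(T)|-2$ when $T$ is not a path, which your method does not immediately yield.
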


\begin{proof} If  $T\cong P_n$, then $LS(T)=n-1$.

Suppose that  $T$ is not a path. Then $n\ge 4$. Let $x$ be a leaf peripherian vertex of $T$. By Lemma~\ref{peripherian}, $x\in L(T)$. So $LS(T)=s_T(x, L(T))=\sum_{y\in L(T)\setminus\{x\}}d_T(x,y)$. Let $u$ be a vertex of degree at least three such that $d_T(x,u)$ is as small as possible. Then the unique path from $x$ to any other leaf of $T$ contains the path from $x$ to $u$. On the other hand, every edge of $T$ lies on some path connecting $x$ and some other leaf of $T$. So
\begin{eqnarray*}
LS(T)&=&\sum_{y\in L(T)\setminus\{x\}}d_T(x,y)\\
&\ge & d_T(x,u)(|L(T)|-1)+n-1-d_T(x,u)\\
&\ge& |L(T)|-2+n-1\\
&>& n-1.
\end{eqnarray*}
So the result follows.
\end{proof}

For integers $n$, $a$ with $1\le a\le n-2$, let $P_{n,a}$ be the tree of order $n$ obtained by identifying the
center of a star $S_{a+1}$ and a terminal vertex of a path $P_{n-a}$. Particularly, $P_{n,1}=P_n$ and  $P_{n,n-2}=S_n$.

\begin{Theorem}
Let $T$ be a tree of order $n\ge 4$. Then
\[
LS(T)\le \left\lfloor\frac{n^2}{4}\right\rfloor
\]
with equality if and only if $T\cong P_{n,\frac{n}{2}}$ for even $n$, and  $T\cong P_{n,\frac{n-1}{2}}, P_{n,\frac{n+1}{2}}$ for odd $n$.
\end{Theorem}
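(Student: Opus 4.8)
The plan is to use Lemma~\ref{peripherian} to place a leaf peripherian vertex at a leaf and then to exploit the pendant path hanging at it. If $T\cong P_n$ then $LS(T)=n-1<\left\lfloor\frac{n^2}{4}\right\rfloor$ for $n\ge 4$, so $P_n$ is not extremal and we may assume $T$ is not a path. Then by Lemma~\ref{peripherian} a leaf peripherian vertex $x$ of $T$ is a leaf, so
\[
LS(T)=s_T(x,L(T))=\sum_{y\in L(T)\setminus\{x\}}d_T(x,y).
\]
Since $T$ is not a path it has a vertex of degree at least $3$; let $u$ be one nearest to $x$, let $m=d_T(x,u)\ge 1$, and let $x=x_0,x_1,\dots,x_m=u$ be the $x$--$u$ path, so $\delta_T(x_i)=2$ for $1\le i\le m-1$. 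Every leaf $y\ne x$ of $T$ then satisfies $d_T(x,y)=m+d_T(u,y)$.

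Now I set $T'=T-\{x_0,\dots,x_{m-1}\}$. It is a tree with $n-m$ vertices, $\delta_{T'}(u)=\delta_T(u)-1\ge 2$, and $L(T')=L(T)\setminus\{x\}$. Write $k=|L(T')|=|L(T)|-1$; note $k\ge 2$ because each of the $\delta_T(u)\ge 3$ branches at $u$ contains a leaf of $T$. Summing the identity $d_T(x,y)=m+d_{T'}(u,y)$ over $y\in L(T')$ gives
\[
LS(T)=mk+s_{T'}(u,L(T')).
\]
The key estimate is $d_{T'}(u,y)\le |I(T')|=n-m-k$ for every $y\in L(T')$: on the $u$--$y$ path $u=q_0,q_1,\dots,q_d=y$ each of $q_0,\dots,q_{d-1}$ has degree at least $2$ in $T'$ and so lies in $I(T')$, whence $d\le |I(T')|$. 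Summing over the $k$ leaves, $s_{T'}(u,L(T'))\le k(n-m-k)$, and therefore
\[
LS(T)\le mk+k(n-m-k)=k(n-k)\le\left\lfloor\frac{n^2}{4}\right\rfloor,
\]
the last inequality being the elementary fact that $k(n-k)$, over integers $k$, is maximized by $k=\frac n2$ (even $n$) or $k=\frac{n\pm1}2$ (odd $n$).

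For the equality case, $LS(T)=\left\lfloor\frac{n^2}{4}\right\rfloor$ forces $T$ not to be a path, $k\in\{\lfloor n/2\rfloor,\lceil n/2\rceil\}$, and $d_{T'}(u,y)=|I(T')|$ for every leaf $y$ of $T'$; this last condition means that every $u$--leaf path of $T'$ runs through all of $I(T')$. I would then observe that if $|I(T')|\ge 2$ this forces $T'[I(T')]$ to be a path with $u$ as an endpoint and all $k$ leaves attached at the other endpoint, so that $\delta_{T'}(u)=1$, contradicting $\delta_{T'}(u)\ge 2$. Hence $|I(T')|=1$, i.e.\ $T'\cong S_{k+1}$ with centre $u$; thus $T$ consists of the pendant path $x_0\cdots x_m$ together with $k$ leaves at $u$, which is precisely $P_{n,k}$. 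Substituting the possible values of $k$ yields $T\cong P_{n,n/2}$ for even $n$ and $T\cong P_{n,(n-1)/2}$ or $P_{n,(n+1)/2}$ for odd $n$; conversely each of these trees has $LS=k(n-k)=\left\lfloor\frac{n^2}{4}\right\rfloor$, which one confirms by checking that an endpoint of its pendant path is a leaf peripherian vertex. The step I expect to be the real obstacle is exactly this last structural deduction in the equality analysis --- extracting $T'\cong S_{k+1}$ from the tightness of $d_{T'}(u,y)\le |I(T')|$ combined with $\delta_{T'}(u)\ge 2$; the rest is routine bookkeeping.
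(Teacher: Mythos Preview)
Your proof is correct and takes a genuinely different route from the paper's.

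The paper argues extremally: it fixes a tree $T$ maximizing $LS$, handles $n=4,5$ by hand, rules out $P_n$ and $S_n$, takes a longest path $P=x_0\dots x_r$ from a leaf peripherian vertex $x$, and then uses an edge-moving argument (shifting edges off $x_i$ onto $x_{r-1}$) to force $\delta_T(x_i)=2$ for $1\le i\le r-2$, so that $T\cong P_{n,a}$; finally it optimizes $a(n-a)$. Your argument is instead a direct inequality valid for every non-path tree: from the leaf peripherian vertex $x$ you strip the pendant path down to the nearest branch vertex $u$, and the single observation $d_{T'}(u,y)\le |I(T')|$ immediately gives $LS(T)\le k(n-k)$ with $k=|L(T)|-1$. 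This avoids the extremal setup entirely, needs no case analysis for small $n$, and produces as a by-product the intermediate bound $LS(T)\le (|L(T)|-1)(n-|L(T)|+1)$, which the paper's method does not yield. Your equality analysis is also clean: the tightness of $d_{T'}(u,y)=|I(T')|$ for every leaf forces the internal vertices of $T'$ to line up on a single path starting at $u$, and $\delta_{T'}(u)\ge 2$ then kills every case except $|I(T')|=1$. The step you flagged as the obstacle is in fact entirely solid; the only point worth spelling out more explicitly in a final write-up is the converse verification that in $P_{n,k}$ the far end of the pendant path really is a leaf peripherian vertex (i.e.\ $k(n-k)\ge n+k-2$ for the relevant $k$), which is an easy check for $n\ge 4$.
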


\begin{proof}
If $n=4$, then $T\cong P_4$ or $S_4$ $(=P_{4,2})$ and $LS(P_4)=3<4=LS(P_{4,2})$. If $n=5$, then $T\cong P_5$, $P_{5,2}$ or $S_5$ $(=P_{5,3})$ and $LS(P_5)=4<6=LS(P_{5,2})=LS(P_{5,3})$. So the result holds if $n=4,5$.

Suppose that $n\ge 6$. Let $T$ be a tree of order $n$ that maximizes the maximum leaf status.
Note that the maximum leaf status of $P_{n,\frac{n}{2}}$ for even $n$, and  $P_{n,\frac{n-1}{2}}$ or  $P_{n,\frac{n+1}{2}}$ for odd $n$ is $\left\lfloor\frac{n^2}{4}\right\rfloor$.
If  $T\cong P_n$, then $LS(T)=n-1<\left\lfloor\frac{n^2}{4}\right\rfloor$.
If $T\cong S_n$, then $LS(T)=2(n-2)<\left\lfloor\frac{n^2}{4}\right\rfloor$.
So  $T$ is neither a path nor a star.

Let $x$ be a leaf peripherian vertex of $T$. By Lemma \ref{peripherian}, $x\in L(T)$.
Let $P := x_0\dots x_r$ be a longest path of $T$ starting from  $x_0=x$. Then $3\le r\le n-2$.
Note that  $\delta_T(x_i)\ge 3$ for some $i$ with  $1\le i\le r-1$.
By the choice of $T$,
$\delta_T(x_i)=2$ for $i=1,\dots, r-2$, as, otherwise, by moving an edge outside $P$  from $x_i$ to $x_{r-1}$ we get a tree $T'$, for which we have $LS(T')\ge  s_{T'}(x,L(T'))>s_{T}(x,L(T))=LS(T)$, which is a contradiction.
As $P$ is a longest path from  $x$, all neighbors of $x_{r-1}$ except $x_{r-2}$ are leaves. Let $a=\delta_T(x_{r-1})-1$. Then $T\cong P_{n,a}$, and
\[
LS(T)=s_T(x,L(T))=a(n-a)
\]
which is maximized to  $\lfloor\frac{n^2}{4}\rfloor$  if and only if $a=\frac{n}{2}$ if $n$ is even, and  $a=\frac{n-1}{2}, \frac{n+1}{2}$ if $n$ is odd.
\end{proof}

\begin{Theorem} Let $T$ be a tree of order $n$ with diameter $d$, where $2\le d\le n-1$.
Let $t=\left\lceil\frac{2(n-1-d)}{d}\right\rceil$ for even $d$ and $t=\left\lceil\frac{2(n-1-d)}{d-1}\right\rceil$ for odd $d$.
Then
\[
LS(T)\ge n-1+\left\lceil\frac{dt}{2}\right\rceil
\]
with equality if and only if $T$ is a tree with a diametric path between two leaves $x$ and $y$ and  exactly $t$ hanging paths at vertices of the diametric path such that $LS(T)=s_T(x,L(T))=s_T(y,L(T))$ for even $dt$ and $LS(T)=\max\{s_T(x,L(T)),  s_T(y,L(T))\}$ and
$|s_T(x,L(T))- s_T(y,L(T))|=1$
for odd $dt$.
\end{Theorem}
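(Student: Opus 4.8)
The plan is to lower-bound $LS(T)$ by $\max\{s_T(x,L(T)),s_T(y,L(T))\}$, where $x=x_0$ and $y=x_d$ are the (leaf) endpoints of a fixed diametric path $P=x_0x_1\dots x_d$, and then to estimate this maximum by combining a global edge-count identity with a lower bound on the number of leaves. If $d=n-1$ then $T\cong P_n$, $t=0$, and $LS(T)=n-1$, which fits the statement; so assume $T\not\cong P_n$. Expanding each distance as the number of edges of the corresponding path and interchanging the two summations, I get
\[
s_T(x,L(T))+s_T(y,L(T))=d\,|L(T)|+2\sum_{e\in E(T)\setminus E(P)}\big|L(T)\cap V(B_e)\big|,
\]
where for $e\notin E(P)$ the set $B_e$ is the component of $T-e$ avoiding $P$: an edge of $P$ separates $x$ from $y$ and contributes $|L(T)|$, while an edge off $P$ keeps $x,y$ on one side and contributes $2|L(T)\cap V(B_e)|$. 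Since each such $B_e$ contains a leaf of $T$ and there are $n-1-d$ edges off $P$, the right-hand side is at least $d\,|L(T)|+2(n-1-d)$.

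Next I show $|L(T)|\ge t+2$. For each leaf $w\notin\{x,y\}$, let $Q_w$ be the path in $T$ from $w$ to its nearest vertex $x_{i(w)}$ of $P$. Since $P$ is a longest path, $|V(Q_w)\setminus V(P)|=d_T(w,x_{i(w)})\le\min\{i(w),d-i(w)\}\le\lfloor d/2\rfloor$. As every vertex off $P$ lies on some $Q_w$ (descend from it to a leaf of $T$), the $n-1-d$ vertices off $P$ are covered by the $|L(T)|-2$ sets $V(Q_w)\setminus V(P)$, hence $n-1-d\le(|L(T)|-2)\lfloor d/2\rfloor$ and so $|L(T)|-2\ge\lceil(n-1-d)/\lfloor d/2\rfloor\rceil=t$ (this ceiling is $\lceil 2(n-1-d)/d\rceil$ for even $d$ and $\lceil 2(n-1-d)/(d-1)\rceil$ for odd $d$). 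Combining, $s_T(x,L(T))+s_T(y,L(T))\ge d(t+2)+2(n-1-d)=2(n-1)+dt$; since $LS(T)$ is an integer with $LS(T)\ge\max\{s_T(x,L(T)),s_T(y,L(T))\}\ge(n-1)+dt/2$, we conclude $LS(T)\ge n-1+\lceil dt/2\rceil$.

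For the ``if'' direction of the equality statement, if $T$ is $P$ together with $t$ hanging paths of lengths $\ell_1,\dots,\ell_t$ based at $x_{j_1},\dots,x_{j_t}$, then $\sum_k\ell_k=n-1-d$, $L(T)=\{x,y,w_1,\dots,w_t\}$ with $w_k$ the far end of the $k$-th hanging path, and $s_T(x,L(T))=d+\sum_k(j_k+\ell_k)$, $s_T(y,L(T))=d+\sum_k(d-j_k+\ell_k)$, so their sum equals $2(n-1)+dt$; together with the hypothesised balance this yields $LS(T)=n-1+\lceil dt/2\rceil$. Conversely, if $LS(T)=n-1+\lceil dt/2\rceil$, then $s_T(x,L(T))+s_T(y,L(T))\le 2LS(T)\le 2(n-1)+dt+1$, which forces in turn $|L(T)|=t+2$ (else $|L(T)|\ge t+3$ pushes the sum to at least $2(n-1)+dt+d\ge 2(n-1)+dt+2$), then $|L(T)\cap V(B_e)|=1$ for every $e\notin E(P)$, and then $s_T(x,L(T))+s_T(y,L(T))=2(n-1)+dt$ with the balance $s_T(x,L(T))=s_T(y,L(T))$ for even $dt$ and $|s_T(x,L(T))-s_T(y,L(T))|=1$ for odd $dt$. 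Finally, ``$|L(T)\cap V(B_e)|=1$ for every $e\notin E(P)$'' says exactly that each branch of $T$ at an internal vertex $x_i$ of $P$ avoiding $P$ contains a single leaf of $T$, hence is a path; so every nontrivial component of $T-E(P)$ is one of the $x_i$ with pendant paths attached, i.e.\ $T$ is $P$ with hanging paths at internal vertices of $P$, and the number of these equals $|L(T)|-2=t$.

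I expect the ``only if'' structural step to be the main obstacle: turning the edge conditions into the hanging-path description, and verifying that the number of hanging paths is forced to be exactly the extremal parameter $t$, requires the depth bound $d_T(w,P)\le\min\{i,d-i\}$ to be used sharply, which is also where the two parities of $d$ enter. Everything else reduces to the two displayed inequalities, elementary ceiling arithmetic, and the basic fact that the ends of a diametric path are leaves.
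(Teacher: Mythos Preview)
Your proof is correct and follows essentially the same route as the paper: bound $LS(T)$ below by the average of $s_T(x,L(T))$ and $s_T(y,L(T))$ for the endpoints $x,y$ of a diametric path, use the identity $s_T(x,L(T))+s_T(y,L(T))=d\,|L(T)|+2\sum_{w\in L(T)\setminus\{x,y\}}d_T(w,P)$ (your edge-sum $\sum_{e\notin E(P)}|L(T)\cap V(B_e)|$ is exactly that distance sum), and invoke $|L(T)|\ge t+2$. The only substantive difference is that you give a self-contained proof of the leaf bound where the paper cites Qiao--Zhan; your equality analysis (forcing $|L(T)|=t+2$, then $|L(T)\cap V(B_e)|=1$ for each off-path edge, hence the hanging-path structure) is complete as written, so your closing worry is unfounded.
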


\begin{proof} Let $P:= v_0v_1\dots v_d$ be a diametric path in $T$. Let $L^*(T)=L(T)\setminus\{v_0, v_{d}\}$.
Note that
\[
\sum_{w\in L^*(T)}d_T(w,P)\ge |E(T)|-d=n-1-d
\]
with equality if and only if each vertex outside $P$ has degree one or two in $T$.
For convenience,  denote by $t=\lceil\frac{2(n-1-d)}{d}\rceil$ for even $d$ and $t=\lceil\frac{2(n-1-d)}{d-1}\rceil$ for odd $d$. By a result of Qiao and Zhan~\cite{QZ}, $|L(T)|\ge t+2$. So $|L^*(T)|\ge t$.
Then
\begin{eqnarray}\label{eq1}
LS(T)&\ge& \max\{s_T(v_{0},L(T)),s_T(v_{d},L(T))\}\nonumber \\
&\ge&\frac{s_T(v_{0},L(T))+s_T(v_{d},L(T))}{2}\nonumber \\
&=&d+\sum_{w\in L^*(T)}\frac{d_T(v_0,w)+d_T(v_{d},w)}{2} \nonumber\\
&=&d+\frac{|L^*(T)|d}{2}+\sum_{w\in L^*(T)}d_T(w,P)\\
&\ge&d+\frac{dt}{2}+n-1-d \nonumber\\
&=&n-1+\frac{dt}{2}.\nonumber
\end{eqnarray}
So
\[
LS(T)\ge n-1+\left\lceil\frac{dt}{2}\right\rceil.
\]

Suppose  that $LS(T)= n-1+\left\lceil\frac{dt}{2}\right\rceil$. By the proof of \eqref{eq1}, each vertex outside $P$ has degree one or two in $T$.
Note also that  $|L^*(T)|=t$. Otherwise,   $|L^*(T)|\ge t+1$. So, by \eqref{eq1},
\[
n-1+\left\lceil\frac{dt}{2}\right\rceil=LS(T) \ge d+\frac{|L^*(T)|d}{2}+n-1-d\ge  n-1+\frac{d(t+1)}{2},
\]
a contradiction. So $T$ is a tree with a diametric path $P$ and exactly $t$ hanging paths at vertices of $P$.

If $dt$ is even, then the three inequalities in \eqref{eq1} must be equalities, so  $LS(T)=s_T(v_{0},L(T))=s_T(v_{d},L(T))$

Suppose next that $dt$ is odd.  From \eqref{eq1}, we have
\[
\max\{s_T(v_{0},L(T)),s_T(v_{d},L(T))\}-\frac{s_T(v_{0},L(T))+s_T(v_{d},L(T))}{2}\le \frac{1}{2},
\]
i.e., $|s_T(v_{0},L(T))-s_T(v_{d},L(T))|=0,1$. As $|L^*(T)|=t$, we have  $|s_T(v_{0},L(T))-s_T(v_{d},L(T))|=1$ by \eqref{eq1}.

From \eqref{eq1}, we also have
$LS(T)=\max\{s_T(v_{0},L(T)),s_T(v_{d},L(T))\}$, as otherwise,
\begin{eqnarray*}
n-1+\left\lceil\frac{dt}{2}\right\rceil &=& LS(T)\\
&>& \max\{s_T(v_{0},L(T)),s_T(v_{d},L(T))\} \\
&=&\frac{s_T(v_{0},L(T))+s_T(v_{d},L(T))+1}{2}\\
&=&n-1+\left\lceil\frac{dt}{2}\right\rceil,
\end{eqnarray*}
a contradiction.

Conversely, if $T$ is a tree with a diametric path between two leaves $x$ and $y$ and  exactly $t$ hanging paths at vertices of the diametric path such that $LS(T)=s_T(x,L(T))=s_T(y,L(T))$ for even $dt$ and $|s_T(x,L(T))- s_T(y,L(T))|=1$ and $LS(T)=\max\{s_T(x,L(T)),  s_T(y,L(T))\}$ for odd $dt$,
then \begin{eqnarray*}
LS(T)&=&\max\{s_T(x,L(T)),s_T(y,L(T))\}\\ [2mm]
&=&\left\lceil\frac{s_T(x,L(T))+s_T(y,L(T))}{2}\right\rceil\\
&=&n-1+\left\lceil\frac{dt}{2}\right\rceil,
\end{eqnarray*}
as desired.
\end{proof}

We give an example on trees of order $15$ with diameter $8$. The three trees in Fig.~\ref{fig:1} are the ones that minimize the maximum leaf status.

\begin{figure}[htbp]
\centering
\includegraphics[width=12.8cm,height=1.6cm]{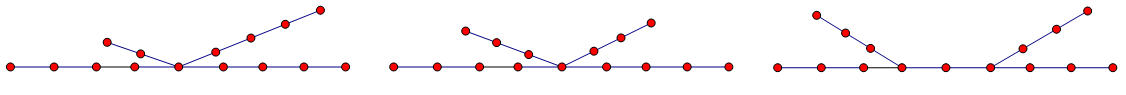}
\caption{Three trees of order $15$ with diameter $8$.}
\label{fig:1}
\end{figure}

\begin{Theorem} Let $T$ be a tree of order $n$ with diameter $d$, where $2\le d\le n-1$. Then
\[
LS(T)\le d(n-d)
\]
with equality if and only if $T\cong P_{n,n-d}$.
\end{Theorem}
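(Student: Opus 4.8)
The plan is to show that among all trees of order $n$ with diameter $d$, the tree $P_{n,n-d}$ (a path of length $d$ together with $n-d-1$ extra leaves all attached at one penultimate vertex of that path) achieves the maximum value of $LS$, and nothing else does. The starting point is Lemma~\ref{peripherian}: if $T$ maximizes $LS$ among such trees and $T$ is not a path (if $d=n-1$ then $T\cong P_n=P_{n,1}$ and $LS(T)=n-1=d(n-d)$, so we may assume $d\le n-2$, whence $T$ is not a path), a leaf peripherian vertex $x$ is itself a leaf, so $LS(T)=s_T(x,L(T))=\sum_{y\in L(T)\setminus\{x\}}d_T(x,y)$. I would then take a longest path $P:=x_0x_1\dots x_r$ with $x_0=x$; since the diameter is $d$ we have $r\le d$, and in fact $r=d$ is forced by an exchange argument (otherwise one could lengthen the path by moving a pendant structure, strictly increasing the distances from $x$ to the affected leaves without decreasing $d$).

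The core is a sequence of edge-moving (compression) arguments, each preserving or controlling the diameter while strictly increasing $s_T(x,L(T))$, exactly in the spirit of the earlier $LS$ proofs in the excerpt. First I would argue that every internal vertex of $P$ other than $x_{d-1}$ has degree $2$: if some $x_i$ with $1\le i\le d-2$ has a neighbor off $P$, then moving the corresponding subtree's leaf edges to hang from $x_{d-1}$ (or, more carefully, reattaching so as to keep the diameter equal to $d$) strictly increases $\sum_{y\in L(T)\setminus\{x\}}d_T(x,y)$, because each relocated leaf moves farther from $x$, and the longest-path length is unchanged. Next, any branch hanging off the path at $x_{d-1}$ that is not a single leaf edge can be compressed: replacing a hanging path of length $\ge 2$ at $x_{d-1}$ by that many pendant leaves at $x_{d-1}$ keeps $d$ the same (the relevant longest path through such a branch has length at most $d$ already, since the branch toward $x$ has length $d-1$) while not decreasing — and in the strict case increasing — the leaf-status sum; similarly, no branch at $x_{d-1}$ can have length exceeding $1$ without violating diameter $d$ or allowing an increase. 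This pins down $T\cong P_{n,a}$ for some $a=\delta_T(x_{d-1})-1$; the diameter being exactly $d$ forces $a=n-d$, i.e. $T\cong P_{n,n-d}$, and a direct computation gives $s_T(x,L(T))=d\cdot(n-d)$ (the $n-d$ leaves near $x_{d-1}$ are each at distance $d-1$ from $x$, plus the single leaf $x_d$ at distance... wait — here I must be careful: $LS(P_{n,n-d})=(n-d)(d-?)$; the clean count is that $x_0$ sees $n-d-1$ leaves at distance $d-1$ and one leaf $x_d$ at distance $d$, giving $(n-d-1)(d-1)+d$, but the identification $P_{n,n-d}$ as defined collapses this to $d(n-d)$; I would verify the exact arithmetic against the definition $P_{n,a}$ with $LS(P_{n,a})=a(n-a)$ established in the previous theorem, specialized at $a=n-d$).

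For the equality characterization I would simply note that each compression step above is \emph{strictly} status-increasing unless $T$ is already $P_{n,n-d}$, so $P_{n,n-d}$ is the unique extremal tree; one also checks its diameter is indeed $d$ (the path $x_0\dots x_d$ has length $d$, and no longer path exists since all other leaves sit at distance $d-1$ from $x_0$ and distance $\le d$ from one another). The main obstacle I anticipate is bookkeeping in the exchange arguments: each time I move edges I must simultaneously ensure (i) the diameter stays exactly $d$ — not just $\le d$, since the statement fixes $d$ — and (ii) that $x$ (or a suitable replacement leaf) remains a leaf peripherian vertex so that $LS$ of the new tree is correctly computed as the status at that leaf. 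Handling the diameter-preservation delicately, particularly when relocating subtrees toward $x_{d-1}$ versus toward $x_1$, is the step most likely to require care, and I would model it closely on the caterpillar/compression reasoning already used for the minimum-leaf-status-with-given-diameter theorem earlier in the paper.
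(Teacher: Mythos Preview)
Your overall strategy matches the paper's: pick a leaf peripherian vertex $x$, argue that it lies on a diametric path, then compress all remaining leaves onto the penultimate vertex. However, the step you treat as routine --- showing that the longest path from $x$ has length exactly $d$ --- is precisely where the paper spends most of its effort, and your one-line justification does not go through.

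You write that if $r<d$ ``one could lengthen the path by moving a pendant structure, strictly increasing the distances from $x$ to the affected leaves without decreasing $d$.'' Two problems. First, the danger when pushing leaves farther from $x$ is \emph{increasing} the diameter beyond $d$, not decreasing it; since the class is trees of diameter exactly $d$, any move that lengthens the longest path out of $x$ risks creating a path longer than $d$ somewhere. Second, and more seriously, no single compression of this kind settles the matter. The paper assumes $x$ lies off every diametric path $P=x_0\cdots x_d$, uses several separate edge moves to reduce to the situation where (i) the path from $x$ to $P$ has only degree-$2$ internal vertices, (ii) the attaching vertex $x_i$ has degree exactly $3$, and (iii) every $x_k$ with $2\le k\le d-2$, $k\ne i$, has degree $2$; only after this structural reduction does it finish, and the finish is \emph{not} an edge move but a direct computation showing $s_T(x_0,L(T))-s_T(x,L(T))=(n-d-\ell-1)(i-\ell)>0$, contradicting that $x$ was leaf peripherian. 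Your sketch contains no analogue of this comparison step, and it is not clear how to replace it by a diameter-preserving exchange.

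Once $x$ is known to lie on a diametric path $xv_1\cdots v_d$, the remainder of your outline (forcing $\delta_T(v_i)=2$ for $i\le d-2$ and reading off $T\cong P_{n,n-d}$) is essentially the paper's argument and is fine; your hesitation about the final arithmetic is unnecessary, since $LS(P_{n,a})=a(n-a)$ specialized to $a=n-d$ gives $d(n-d)$ directly.
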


\begin{proof} The result is trivial if $d=2, n-1$. Suppose that $3\le d\le n-2$.
Let $T$ be a tree of order $n$ with diameter $d$ that maximizes the maximum leaf status.

Let $x$ be a leaf peripherian vertex of $T$. By Lemma \ref{peripherian}, $x\in L(T)$.
We want to show that $x$ lies on some diametric path of $T$. Suppose that this is not true. That is, $x$ lies outside any diametric path.
 Let $P:= x_0 x_1\dots x_d$ be an arbitrary  diametric path of $T$. Then
$d_T(x, P)=d_T(x,x_i)$ for some $i$ with $2\le i\le d-2$ and $1\le d_T(x,x_i)\le\min\{i-1,d-i-1\}$.
By the choice of $T$, any vertex in the path $Q$ connecting $x$ and $x_i$ except $x$ and $x_i$ (if any exists) has degree two in $T$, as, otherwise, we move an edge outside $Q$ from this vertex to $x_i$ to form a tree $T'$, and for $T'$, its diameter is still $d$, but  $LS(T')\ge s_{T'}(x,L(T'))>s_{T}(x,L(T))=LS(T)$, which is a contradiction.
 Let $r=\delta_T(x_i)$. Denote by $z_1,\dots,z_{r}$ the neighbors of $x_i$ in $T$, where $z_1$ lies on the path $Q$, $z_2=x_{i-1}$ and $z_3=x_{i+1}$. For $i=1,\dots,r$, let $B_i$ be the branch of $T$ at $x_i$ containing $z_{i}$, and  let $a_i=|V(B_i)\cap L(T)|$. Suppose that $r\ge 4$. Let $T'=T-x_iz_1+x_{i-1}z_1$. and $T''=T-x_iz_1+x_{i+1}z_1$.  Note  the diameters of $T'$ and $T''$ are both $d$. By direct calculation,
\begin{eqnarray*}
s_{T'}(x,L(T'))-s_T(x,L(T))&=&\sum_{j=3}^r\sum_{w\in V(B_j)\cap L(T)}1-\sum_{w\in V(B_{2})\cap L(T)}1\\
&=&\sum_{j=3}^{r}a_j-a_2\\
&=&\sum_{j=2}^ra_j-2a_2,
\end{eqnarray*}
and similarly,
\[
s_{T''}(x,L(T''))-s_T(x,L(T))=\sum_{j=2}^ra_j-2a_3.
\]
So $s_{T'}(x,L(T'))-s_T(x,L(T))+s_{T''}(x,L(T''))-s_T(x,L(T))=2\sum_{j=4}^ra_j>0$, implying that
either $LS(T')\ge s_{T'}(x,L(T'))>LS(T)$ or $LS(T'')\ge s_{T''}(x,L(T''))>LS(T)$, a contradiction.
It follows that  $r=3$. Suppose that  there is a leaf  of $T$ different from $x$ such that it is not  adjacent to $x_1$ or $x_{d-1}$. Choose such a leaf $y$ of $T$ such that $d_T(y,P)$ is as large as possible. Assume that $d_T(y, P)=d_T(y,x_j)$, where $2\le j\le d-2$ with $j\ne i$. Assume that $j<i$. Let $y'$ be the unique neighbor of $y$ in $T$. By moving the leaf edges at $y'$  from $y'$ to $x_1$ we get a tree $T'$, for which we have $LS(T')-LS(T)\ge s_{T'}(x, L(T'))-s_{T}(x, L(T))\ge d_T(x,y')>0$, so
$LS(T')>LS(T)$, a contradiction. This shows that $\delta_T(x_k)=2$ for any $2\le k\le d-2$ with $k\ne i$.

Let $\ell=d_T(x, x_i)$.  Assume that $i\le \lfloor\frac{d}{2}\rfloor$. By the choice of $T$, $\delta_T(x_1)=2$ and $\delta_T(x_{d-1})=n-d-\ell$.
So $LS(T)=s_T(x,L(T))=\ell+i+(\ell+d-i)(n-d-\ell-1)$ and $s_T(x_0,L(T))=(n-d-\ell-1)d+\ell+i$.
But  $s_T(x_0,L(T))-LS(T)=(n-d-\ell-1)(i-\ell)>0$, a contradiction. Thus $x$ lies on some diametric path of $T$.

Let $P':=xv_1\dots v_d$ be a diametric path starting from $x$. By the choice of $T$ and the argument as above, all leaves of $T$ different from  $x$ are adjacent to $v_{d-1}$, i.e.,  $T\cong P_{n,n-d}$.  The result follows by noting that $LS(T)=d(n-d)$.
\end{proof}

\section{Minimum internal status}

\begin{Proposition}\label{ismin} Let $T$ be a tree of order $n\ge 3$.
\begin{enumerate}
\item[(i)] $is(T)\ge 0$
with equality if and only if $T\cong S_n$.

\item[(ii)]  If $T\ncong S_n$, then $is(T)\ge 1$ with equality if and only if $T$ is a double star.

\item[(iii)]
 If $T\ncong S_n$, and $T$ is not a double star, then $is(T)\ge 2$ with equality if and only if $T$ is a caterpillar of diameter $4$.
\end{enumerate}
\end{Proposition}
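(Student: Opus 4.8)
The plan is to treat all three parts through the single function $f(u)=s_T(u,I(T))$: since $is(T)=f(u)$ at an $I(T)$-centroid vertex $u$ while $f(u')\ge is(T)$ for every $u'$, once a lower bound $is(T)\ge c$ is established the equality case $is(T)=c$ becomes equivalent to the existence of just one vertex $u$ with $f(u)=c$. Two elementary facts drive everything: for a tree of order $n\ge 3$ one has $I(T)\ne\emptyset$; and $I(T)$ induces a subtree of $T$, since any vertex lying on the path between two internal vertices has two neighbours on that path, hence degree at least two, hence belongs to $I(T)$. Part (i) is then immediate: $f(u)=0$ forces $I(T)\subseteq\{u\}$, so $|I(T)|=1$, i.e. $T$ has a unique internal vertex, i.e. $T\cong S_n$; conversely the centre of $S_n$ has internal status $0$.

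For part (ii) assume $T\ncong S_n$, so $|I(T)|\ge 2$. If $u\in L(T)$ then $f(u)\ge|I(T)|\ge 2$, and if $u\in I(T)$ then $f(u)\ge 1$ because some other internal vertex lies at distance at least $1$; hence $is(T)\ge 1$. If $is(T)=1$, then some internal $u$ has exactly one further internal vertex, at distance exactly $1$, so $I(T)$ is a single edge $\{u,u'\}$; since $\delta_T(u),\delta_T(u')\ge 2$ and the only internal neighbour of $u$ (resp. $u'$) is $u'$ (resp. $u$), each of $u,u'$ carries a leaf, so $T$ is a double star. Conversely a double star $T$ has $I(T)=\{c_1,c_2\}$ with $c_1\sim c_2$, whence $f(c_1)=1$ and $is(T)=1$ by the bound just shown.

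For part (iii) assume in addition that $T$ is not a double star. By (ii) $is(T)\ge 1$, and $is(T)=1$ would force $T$ to be a double star, so $is(T)\ge 2$. For the equality case examine $f(u)=2$. The vertex $u$ cannot be a leaf, for that would require $|I(T)|=2$ with both internal vertices adjacent to $u$, contradicting $\delta_T(u)=1$; so $u\in I(T)$, and the distances from $u$ to the remaining internal vertices sum to $2$. Since $I(T)$ is a subtree it cannot be $u$ together with a single vertex at distance $2$, so $I(T)=\{u,v,w\}$ with $u\sim v$, $u\sim w$ and $v\not\sim w$, i.e. $I(T)$ induces the path $v\,u\,w$; each of $v,w$ then carries a leaf, producing a path of length $4$ in $T$, and combined with $|I(T)|=3$ (a diametric path of length $d$ forces $|I(T)|\ge d-1$) the diameter is exactly $4$, while deleting the leaves off this diametric path leaves exactly $v\,u\,w$ with its two leaf endpoints, so $T$ is a caterpillar. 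Conversely, for a caterpillar $T$ of diameter $4$ with diametric path $x_0x_1x_2x_3x_4$, the definition gives $I(T)=\{x_1,x_2,x_3\}$ (using that $x_0,x_4$ are leaves), so $f(x_2)=2$, and since every other vertex has strictly larger internal status, $is(T)=2$. The inequalities are routine; the only point needing care is the forward direction of the two equality characterizations, i.e. translating "$I(T)$ is a single edge" and "$I(T)$ induces a path on three vertices" into the stated structural descriptions, and here the subtree structure of $I(T)$ together with the already-noted fact that the endpoints of a diametric path are leaves do all the work, so I anticipate no real obstacle.
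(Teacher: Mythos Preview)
Your proof is correct and follows essentially the same approach as the paper: both arguments hinge on the size of $I(T)$ and the fact that $I(T)$ induces a subtree, translating $is(T)=0,1,2$ into $|I(T)|=1,2,3$ with $I(T)$ a path. The paper's proof is a terse one-liner per item (e.g.\ for (iii) it notes that $|I(T)|>3$ forces the internal subtree to contain $S_4$ or $P_4$, whence $is(T)\ge 3$), while you spell out the equality analysis $f(u)=2$ directly and handle the leaf/internal dichotomy explicitly rather than invoking that an $I(T)$-centroid vertex must be internal; these are cosmetic differences, not a different route.
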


\begin{proof}
Item (i) follows from the fact that  $S_n$ is the only tree with exactly one internal vertex. Item (ii) follows as the double stars are the only trees with exactly two (adjacent) internal vertices. Item (iii) follows as for any tree $T$ of diameter at least $4$ contains three internal vertices inducing  a path $P_3$ in $T$, and if $T$ has more than $3$ internal vertices, then the subtree induced by internal vertices in $T$ contains $S_4$ or $P_4$ so that $is(T)\ge 3$.
\end{proof}

Furthermore, we have

\begin{Proposition} Suppose that $T$ is a tree of order $n$ with diameter $d$, where $2\le d\le n-1$. Then
\[
is(T)\ge \left\lfloor\frac{(d-1)^2}{4}\right\rfloor
\]
with equality if and only if $T$ is caterpillar.
\end{Proposition}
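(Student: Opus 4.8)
The plan is to bound $s_T(u,I(T))$ from below for \emph{every} vertex $u$ by projecting the internal vertices onto a diametric path, and then to recover the equality case from the places where this bound can be tight.

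First I would fix a diametric path $P:=v_0v_1\dots v_d$ of $T$; the endpoints $v_0,v_d$ are leaves while $v_1,\dots,v_{d-1}$ each have degree at least two, so $\{v_1,\dots,v_{d-1}\}\subseteq I(T)$. For any $u\in V(T)$, let $v_j$ be the vertex of $P$ nearest to $u$ (unique, since $T$ is a tree); then the $u$--$v_i$ path passes through $v_j$, so $d_T(u,v_i)=d_T(u,v_j)+|i-j|$ for each $i$. Hence, for every $u\in V(T)$,
\[
s_T(u,I(T))=\sum_{v\in I(T)}d_T(u,v)\ge\sum_{i=1}^{d-1}d_T(u,v_i)=(d-1)\,d_T(u,v_j)+\sum_{i=1}^{d-1}|i-j|\ge\sum_{i=1}^{d-1}|i-j|.
\]
Next I would note the elementary fact that $\sum_{i=1}^{d-1}|i-j|$, the sum of the absolute deviations of $1,\dots,d-1$ from $j$, attains its minimum over integers $j$ precisely at the median position(s) $j\in\{\lfloor d/2\rfloor,\lceil d/2\rceil\}$, with value $\lfloor(d-1)^2/4\rfloor$; this is a one-line computation after separating the parities of $d$. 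Taking the minimum over $u$ then yields $is(T)\ge\lfloor(d-1)^2/4\rfloor$.

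It remains to settle equality. If $T$ is a caterpillar, then deleting the leaves of $T$ lying off $P$ leaves a path that contains $P$; since $P$ is a longest path, that path equals $P$, so every internal vertex of $T$ lies on $P$ and $I(T)=\{v_1,\dots,v_{d-1}\}$. Choosing $u=v_{\lceil d/2\rceil}$ then gives $s_T(u,I(T))=\sum_{i=1}^{d-1}|i-\lceil d/2\rceil|=\lfloor(d-1)^2/4\rfloor$, so equality holds. Conversely, assume $is(T)=\lfloor(d-1)^2/4\rfloor$ and pick $u$ with $s_T(u,I(T))=is(T)$. Then all the inequalities above must be equalities for this $u$: from $(d-1)\,d_T(u,v_j)=0$ and $d\ge2$ we get $u=v_j\in V(P)$; and from $\sum_{v\in I(T)}d_T(u,v)=\sum_{i=1}^{d-1}d_T(u,v_i)$ we get $I(T)\setminus\{v_1,\dots,v_{d-1}\}=\emptyset$, since any internal vertex off $P$ is distinct from $u\in V(P)$ and would add a positive term. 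Thus $I(T)=\{v_1,\dots,v_{d-1}\}\subseteq V(P)$, so deleting the leaves of $T$ off $P$ leaves exactly the path $P$ (no chord can occur in the acyclic graph $T$), i.e.\ $T$ is a caterpillar.

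The projection identity and the median computation are routine. The step I expect to require the most care is the last one: matching the condition ``$I(T)$ is precisely the set of interior vertices of a diametric path'' with the paper's definition of a caterpillar in both directions --- in particular, arguing cleanly that in a caterpillar every internal vertex lies on the diametric path.
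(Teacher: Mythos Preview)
Your proof is correct and follows essentially the same route as the paper: bound $s_T(u,I(T))$ from below by the contribution of the $d-1$ internal vertices $v_1,\dots,v_{d-1}$ on a diametric path, identify that minimum with $s(P_{d-1})=\lfloor(d-1)^2/4\rfloor$, and read off the caterpillar characterization from the equality conditions. The only cosmetic difference is that the paper first invokes Lemma~\ref{newlemma} to observe that an $I(T)$-centroid vertex must be internal and then treats the cases $u\in V(P)$ and $u\in I(T)\setminus V(P)$ separately, whereas your projection identity $d_T(u,v_i)=d_T(u,v_j)+|i-j|$ handles all vertices $u$ uniformly and makes the appeal to that lemma unnecessary.
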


\begin{proof}
By Lemma~\ref{newlemma}, an $I(T)$-centroid vertex is an internal vertex of $T$.

Let $P:= v_0\dots v_d$ be a diametric path in $T$. Obviously, $\{v_1, \dots, v_{d-1}\}$ induces in $T$ a path $P_{d-1}$.
If $T$ is a caterpillar, then $is(T)=s(P_{d-1})$.
Suppose that $T$ is not a caterpillar, i.e., $I(T)\setminus V(P)\neq0$. For any $i=1, \dots, d-1$, we have $s_T(v_i,I(T))\ge s(P_{d-1})+\sum_{w\in I(T)\setminus V(P)}d_T(w,v_i)>s(P_{d-1})$. Suppose that $w\in I(T)\setminus V(P)$. Assume that $d_T(w,v_j)=d_T(w,P)$ for some $2\le j\le d-2$. Then we have
\[
s_T(w,I(T))\ge  s_T(w,I(T)\cap V(P))>s_T(v_j,I(T)\cap V(P))\ge
s(P_{d-1}).
\]
Therefore,  $is(T)\ge s(P_{d-1})=\left\lfloor\frac{(d-1)^2}{4}\right\rfloor$ with equality if and only if $T$ is caterpillar.
\end{proof}

\begin{Theorem}
Let $T$ be a tree of order $n\ge 3$. Then
\[
is(T)\le \left\lfloor\frac{(n-2)^2}{4}\right\rfloor
\]
with equality if and only if $T\cong P_n$.
\end{Theorem}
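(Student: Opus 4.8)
The plan is to maximize $is(T)$ over all trees of order $n$ and show the maximizer is $P_n$. I would start by recalling from Lemma~\ref{newlemma} that an $I(T)$-centroid vertex is an internal vertex, and then bound $is(T)=s_{I(T)}(T)$ above by a quantity depending only on $|I(T)|$. The natural first move: among all trees with a fixed number $k$ of internal vertices, the internal vertices of $T$ induce a subtree $T'$ on $k$ vertices, and for an internal vertex $u$ we have $s_T(u,I(T))=s_{T'}(u,V(T'))$ since distances between internal vertices are the same in $T$ and in $T'$. Hence $is(T)\le s(T')\le s(P_k)=\lfloor k^2/4\rfloor$, using the classical fact that the path maximizes the minimum status among trees on $k$ vertices (this is exactly the ``remoteness'' extremal result; if the paper wants it self-contained it can be folded into the argument below).

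Next I would relate $k=|I(T)|$ to $n$. Since $T$ is not a collection of internal vertices only --- every leaf contributes to $n$ but not to $k$ --- we have $k\le n-2$ whenever $T$ has at least two leaves, which holds for every tree of order $n\ge 2$; in fact every tree has at least two leaves, so $|I(T)|\le n-2$ always. Therefore $is(T)\le \lfloor (n-2)^2/4\rfloor$, which is the claimed bound. Alternatively, and perhaps cleaner for the equality analysis, I would argue directly: let $u$ be an $I(T)$-centroid vertex (an internal vertex) and observe $is(T)=\sum_{w\in I(T)}d_T(u,w)\le \sum_{w\in I(T)}(|I(T)|-1)$ is too weak, so the subtree-and-path comparison is really the right tool. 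I'd phrase the path-maximality of minimum status as a short lemma or cite it, since it is standard.

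For the equality case, suppose $is(T)=\lfloor(n-2)^2/4\rfloor$. Then both inequalities above are tight: first $|I(T)|=n-2$, which forces $T$ to have exactly two leaves, so $T\cong P_n$ immediately --- a tree with exactly two leaves is a path. So in fact the equality case is very short once the bound is set up: the only slack is in $|I(T)|\le n-2$, and tightness there already pins down $T$. Conversely, for $T\cong P_n$ one computes $I(P_n)=V(P_n)\setminus\{\text{two endpoints}\}$, an induced $P_{n-2}$, and the central vertex (or one of the two central vertices) achieves $is(P_n)=s(P_{n-2})=\lfloor(n-2)^2/4\rfloor$, completing the characterization.

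The main obstacle is justifying that the path maximizes the minimum status $s(\cdot)$ among all trees on a given number of vertices --- i.e. $s(T')\le s(P_k)$ for any tree $T'$ on $k$ vertices. This is well known (equivalently, the path has maximum proximity), but if the paper prefers not to cite it, I would prove it by a short transformation argument: if $T'$ is not a path it has a vertex of degree $\ge 3$; take a centroid vertex $u$ of $T'$ and reroute a pendant subtree to lengthen a branch, checking via the branch-weight criterion (Lemma~\ref{oldLem}) that the minimum status does not decrease, and iterate until a path is reached. Everything else --- the subtree distance identity, the inequality $|I(T)|\le n-2$, and the reverse computation for $P_n$ --- is routine.
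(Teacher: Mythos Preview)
Your proposal is correct and follows essentially the same route as the paper: reduce $is(T)$ to the minimum status $s(T')$ of the subtree $T'$ induced by the internal vertices (using Lemma~\ref{newlemma} to ensure the $I(T)$-centroid is internal), bound $s(T')\le \lfloor |V(T')|^2/4\rfloor$ via the path-maximizes-proximity fact, and then use $|I(T)|\le n-2$ with equality iff $T\cong P_n$. The paper simply cites Aouchiche--Hansen for the proximity bound rather than proving it, and writes $is(T)=s(T')$ (your ``$\le$'' suffices for the bound, and you have already noted the centroid is internal, so equality holds).
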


\begin{proof}  By Lemma~\ref{newlemma}, an $I(T)$-centroid vertex is an internal vertex of $T$, so $is(T)$ equals to the minimum status of $T'$, i.e.,  $is(T)=s(T')$, where $T'$ is the tree obtained from $T$ by deleting all leaves. Let $k=|L(T)|$.  By  Proposition 2.1 in \cite{AH}, we have
\[
is(T)=s(T')\le \left\lfloor \frac{(n-k)^2}{4}\right\rfloor\le \left\lfloor \frac{(n-2)^2}{4}\right\rfloor
\]
with equalities if and only if $k=2$ and $T\cong P_n$.
\end{proof}

\section{Maximum internal status}

 Similarly to Proposition~\ref{ismin}, we have

\begin{Proposition} Let $T$ be a tree of order $n\ge 3$. The following statements are true.
\begin{enumerate}
\item[(i)] $IS(T)\ge 1$
with equality if and only if $T\cong S_n$.

\item[(ii)]  If $T\ncong S_n$, then $IS(T)\ge 3$ with equality if and only if $T$ is a double star.

\item[(iii)]
 If $T\ncong S_n$, and $T$ is not a double star, then $IS(T)\ge 6$ with equality if and only if $T$ is a caterpillar of diameter $4$.
\end{enumerate}
\end{Proposition}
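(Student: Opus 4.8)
The plan is to reduce the whole statement to the tree $T':=T-L(T)$ obtained by deleting all leaves of $T$. Note that $V(T')=I(T)$, and since the unique $T$-path between two internal vertices visits only internal vertices, $d_T(u,w)=d_{T'}(u,w)$ for all $u,w\in I(T)$; in particular the status $s_{T'}(u)$ of $u$ in $T'$ equals $s_T(u,I(T))$ for internal $u$. By Lemma~\ref{inperipherian} a vertex of maximum internal status is a leaf, so $IS(T)=\max\{s_T(x,I(T)):x\in L(T)\}$. If $x$ is a leaf with (necessarily internal) neighbour $u$, then $d_T(x,w)=1+d_{T'}(u,w)$ for every $w\in I(T)$, whence $s_T(x,I(T))=|I(T)|+s_{T'}(u)$. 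Therefore
\[
IS(T)=|I(T)|+\max\{\,s_{T'}(u):u\in I(T)\text{ has a neighbour in }L(T)\,\}.
\]
Deriving this identity cleanly — above all the distance‑preservation and the passage from ``all vertices'' to ``leaves'' — is the first step and, I expect, the only genuinely delicate one; everything afterwards is arithmetic in the spirit of Proposition~\ref{ismin}.

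Next I would record two elementary observations. First, every vertex of a connected graph on $m$ vertices has status at least $m-1$, so $s_{T'}(u)\ge|I(T)|-1$ for all $u\in I(T)$. Second, when $|I(T)|\ge 2$ every leaf of $T'$ is an internal vertex of $T$ of degree at least two with exactly one internal neighbour, hence carries a pendant leaf of $T$; thus the maximum in the displayed identity is taken over a set containing all leaves of $T'$. Combining these with the identity gives $IS(T)\ge |I(T)|+(|I(T)|-1)=2|I(T)|-1$ whenever $|I(T)|\ge 2$, and trivially $IS(T)\ge|I(T)|\ge 1$ in all cases.

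With $k:=|I(T)|$ the first two parts now fall out. For (i): $IS(T)\ge k\ge 1$; if $k\ge 2$ then $IS(T)\ge 2k-1\ge 3>1$, so $IS(T)=1$ forces $k=1$, that is, $T\cong S_n$, and conversely $IS(S_n)=1$ since each leaf has internal status $1$. For (ii): $T\not\cong S_n$ gives $k\ge 2$, hence $IS(T)\ge 2k-1\ge 3$; if $k\ge 3$ then $IS(T)\ge 5>3$, so $IS(T)=3$ forces $k=2$; here $T'$ is the two‑vertex path, both of whose vertices have status $1$ and carry pendant leaves, so the identity gives $IS(T)=2+1=3$, and $k=2$ is equivalent to $T$ being a double star.

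For (iii): $T\not\cong S_n$ and $T$ not a double star give $k\ge 3$; if $k\ge 4$ then $IS(T)\ge 2k-1\ge 7>6$; if $k=3$ then $T'$ is the unique tree on three vertices, namely $P_3$, whose two leaves have status $3$ in $T'$ and carry pendant leaves of $T$, so the identity gives $IS(T)=3+3=6$. Hence $IS(T)\ge 6$ with equality exactly when $k=3$, that is, when $T-L(T)\cong P_3$. It then remains to check that this last condition coincides with ``$T$ is a caterpillar of diameter $4$''. If $T-L(T)\cong P_3$ with induced path $a$–$b$–$c$, then $a$ and $c$, being leaves of $T'$ but internal in $T$, each carry at least one pendant leaf of $T$, so $T-L(T)$ is a path and $T$ is a caterpillar whose longest path, of the form (leaf)–$a$–$b$–$c$–(leaf), has length exactly $4$; conversely, for a caterpillar of diameter $4$ with diametric path $x_0x_1x_2x_3x_4$, every vertex off this path is a pendant leaf of $x_1$, $x_2$ or $x_3$, so $I(T)=\{x_1,x_2,x_3\}$ and $T-L(T)\cong P_3$. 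This closes (iii), and with it the proposition. The principal obstacle throughout remains the first step: the clean derivation of the displayed identity for $IS(T)$.
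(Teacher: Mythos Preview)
Your argument is correct. The paper itself offers no proof here beyond the remark ``Similarly to Proposition~\ref{ismin}, we have'', so there is little to compare against; what the paper would presumably intend is the direct case analysis of Proposition~\ref{ismin} (count internal vertices and read off the small cases) transported to $IS$ in place of $is$.

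Your route is a bit more structured: you first establish the identity $IS(T)=|I(T)|+\max\{s_{T'}(u):u\in I(T)\text{ has a leaf neighbour}\}$ via Lemma~\ref{inperipherian}, then derive the uniform bound $IS(T)\ge 2|I(T)|-1$ and finish the three items by evaluating the identity for $|I(T)|=1,2,3$. This buys you a clean reusable formula and makes the equality characterisations mechanical; the paper's implicit ad hoc approach is shorter but less transparent. Both ultimately pivot on the same observation that the cases are governed by $|I(T)|$ and that $T'=T-L(T)$ is $P_1$, $P_2$, or $P_3$ in the three equality cases.
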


Furthermore, we have

\begin{Proposition} Suppose that $T$ is a tree of order $n$ with diameter $d$, where $2\le d\le n-1$. Then
$IS(T)\ge \frac{d(d-1)}{2}$ with equality if and only if $T$ is caterpillar.
\end{Proposition}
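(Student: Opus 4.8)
The plan is to adapt the argument used for the preceding proposition on $is(T)$, except that now one explicitly chosen leaf plays the role that an $I(T)$-centroid vertex played there. Fix a diametric path $P:=v_0v_1\dots v_d$ of $T$. Each of $v_1,\dots,v_{d-1}$ has two neighbours on $P$ and hence is internal, so $\{v_1,\dots,v_{d-1}\}\subseteq I(T)$; moreover every path in a tree is a geodesic, so $d_T(v_0,v_i)=i$. Consequently the leaf $v_0$ already satisfies
\[
IS(T)\ge s_T(v_0,I(T))=\sum_{w\in I(T)}d_T(v_0,w)\ge\sum_{i=1}^{d-1}i=\frac{d(d-1)}{2},
\]
which settles the inequality.

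For the equality discussion I would first record that $I(T)\cap V(P)=\{v_1,\dots,v_{d-1}\}$, since $v_0$ and $v_d$ are leaves, so that
\[
s_T(v_0,I(T))=\frac{d(d-1)}{2}+\sum_{w\in I(T)\setminus V(P)}d_T(v_0,w).
\]
If $T$ is not a caterpillar, then $T-L(T)=T[I(T)]$ is a subtree that is not a path; since the vertices $v_1,\dots,v_{d-1}$ already induce a path in $T$ and a tree contains no cycle, this forces $I(T)\setminus V(P)\ne\emptyset$. Then the displayed sum is strictly positive, and $IS(T)\ge s_T(v_0,I(T))>\frac{d(d-1)}{2}$.

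Conversely, suppose $T$ is a caterpillar of diameter $d$. Then $T-L(T)$ is a path, and — because $T$ carries a leaf at each of its ends — the diameter of $T$ exceeds the number of vertices of this path by exactly $1$; hence $I(T)=\{v_1,\dots,v_{d-1}\}$ and every leaf of $T$ is adjacent to some $v_j$ with $1\le j\le d-1$. By Lemma~\ref{inperipherian} a vertex of maximum internal status is a leaf, say one adjacent to $v_j$, and then
\[
s_T(\cdot,I(T))=\sum_{i=1}^{d-1}\bigl(1+|i-j|\bigr)=(d-1)+\sum_{i=1}^{d-1}|i-j|.
\]
The map $j\mapsto\sum_{i=1}^{d-1}|i-j|$ is convex, so over $1\le j\le d-1$ it is largest at $j=1$ and at $j=d-1$, where its value is $\frac{(d-1)(d-2)}{2}$; therefore $IS(T)\le(d-1)+\frac{(d-1)(d-2)}{2}=\frac{d(d-1)}{2}$, and together with the lower bound this is an equality.

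There is no serious obstacle here; the only place calling for care is the caterpillar case, namely confirming that $|I(T)|=d-1$ and that every leaf of a caterpillar hangs directly off the spine $v_1\dots v_{d-1}$, so that Lemma~\ref{inperipherian} reduces the evaluation of $IS(T)$ to the elementary one-parameter estimate above.
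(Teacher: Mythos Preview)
Your proof is correct and rests on the same idea as the paper's: bound $IS(T)$ from below by the internal status of an endpoint of a diametric path, which already picks up $\sum_{i=1}^{d-1}i=\frac{d(d-1)}{2}$ from the path's own internal vertices.

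The only notable differences are cosmetic. The paper bounds $IS(T)$ by $\max\{s_T(v_0,I(T)),s_T(v_d,I(T))\}$ and then by the average of the two; since $d_T(v_0,w)+d_T(v_d,w)\ge d$ for every $w$, the contribution of any $w\in I(T)\setminus V(P)$ is visibly positive, and the equality condition $I(T)\setminus V(P)=\emptyset$ drops out in one line. You use the single endpoint $v_0$, which is just as good for both the inequality and the ``only if'' direction. Where you are actually more careful than the paper is the converse: the paper asserts equality for caterpillars without checking that $v_0$ (or $v_d$) really is an internal peripherian vertex, whereas you verify via Lemma~\ref{inperipherian} that the maximum over all leaves is attained at a leaf adjacent to $v_1$ or $v_{d-1}$ and equals $\frac{d(d-1)}{2}$.
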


\begin{proof}  Let $P:= v_0v_1\dots v_d$ be a diametric   path  in  $T$. Let $I^*(T)=I(T)\setminus\{v_1, \dots, v_{d-1}\}$. Then
\begin{eqnarray*}
IS(T)&\ge& \max\{s_T(v_{0},I(T)),s_T(v_{d},I(T))\}\\[2mm]
&\ge&\frac{s_T(v_{0},I(T))+s_T(v_{d},I(T))}{2}\\
&=&\frac{2\sum_{i=1}^{d-1}i+\sum_{w\in I^*(T)}(d_T(v_0,w)+d_T(v_{d},w))}{2}\\
&\ge&\frac{d(d-1)}{2}
\end{eqnarray*}
with equality if and only if $I^*(T)=\emptyset$, i.e., $T$ is a caterpillar.
\end{proof}

Let $T$ be a tree with $u\in V(T)$. For positive integer $p$, we denote by $T_{u;p}$ the tree consisting of $T$ and a path $P:=uu_1\dots u_p$ such that $u$ is the only common vertex of $T$ and the path $P$.
In this case, we also say that $P$ is `hanging' a path of length $p$ at $u$ in $T_{u;p}$, though
it is really a hanging path of length $p$ at $u$ in  $T_{u;p}$ only when $\delta_{T}(u)\ge 2$.

Let $G_{u; 0}=G$. For nonnegative integer $p$ and $q$, let $G_{u;p,q}=(G_{u;p})_{u; q}$.

\begin{Lemma}\label{pq}
Let $T$ be a nontrivial tree with $u\in V(T)$ Let $p$ and $q$ be positive integers with $p\ge q$. Then $IS(T_{u;p+1,q-1})=IS(T_{u;p,q})$ if $q\ge 2$ and $T$ is a star with center $u$, otherwise $IS(T_{u;p+1,q-1})>IS(T_{u;p,q})$.
\end{Lemma}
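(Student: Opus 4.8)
The statement compares the maximum internal status of two trees that differ only in how a total of $p+q$ extra path-edges are distributed among two pendant paths hanging at the same vertex $u$ of $T$: one tree has paths of lengths $p$ and $q$, the other has $p+1$ and $q-1$. Since $p\ge q\ge 1$, moving one unit from the shorter to the longer path makes the configuration ``more lopsided''. The plan is to exhibit an internal peripherian vertex of $T_{u;p+1,q-1}$ and compare its internal status directly against the internal status of the corresponding ``farthest'' vertex in $T_{u;p,q}$. By Lemma~\ref{inperipherian}, a maximum-internal-status vertex is always a leaf, so in each tree the candidate is an endpoint of one of the two hanging paths (or a leaf of $T$, which we will rule out or treat uniformly). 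Write $u_1,\dots,u_p$ and $w_1,\dots,w_q$ for the two paths in $T_{u;p,q}$; after the shift these become a path of length $p+1$ and a path of length $q-1$ (the vertex $w_q$ being reattached at the far end of the long path, or equivalently $w_1,\dots,w_{q-1}$ remaining and one new vertex appended beyond $u_p$).

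First I would set up the bookkeeping. Let $m=|I(T)|$ and, for a leaf $z$ of $T_{u;p,q}$, split $s(z,I(\cdot))=s(z, I\cap V(T)) + (\text{contribution of the new internal vertices on the two paths})$. The internal vertices created by hanging a path of length $p\ge 1$ at $u$ are exactly $u,u_1,\dots,u_{p-1}$ when $\delta_T(u)\ge 2$, together with $u$ becoming internal in any case once a second path is attached; the endpoint $u_p$ stays a leaf. One must be slightly careful about whether $u$ is a leaf of $T$ (the ``$T$ is a star with center $u$'' subtlety enters here), but this only shifts the internal-vertex count by a constant and does not affect the comparison. The key arithmetic identity is that moving one edge from the length-$q$ path to the length-$p$ path, and then evaluating at the endpoint of the long path, increases its distance to the $q-2$ surviving internal vertices of the short path by $1$ each but decreases nothing of consequence, while the distances to the internal vertices of $T$ and to the internal vertices of the long path are affected in a controlled, non-negative way. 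I would compute $IS(T_{u;p+1,q-1}) - IS(T_{u;p,q})$ by comparing $s_{T_{u;p+1,q-1}}(u'_{p+1}, I) $ with $s_{T_{u;p,q}}(u_p, I)$, where $u'_{p+1}$ is the endpoint of the new long path, and show the difference is a sum of nonnegative terms which is strictly positive unless the short path contributes no new internal vertices — precisely the case $q=1$, or $q\ge 2$ with $T$ a star centered at $u$ (so that $u,w_1,\dots$ were ``going to be leaves'' of $T$ itself and the accounting degenerates). In the degenerate case one checks by hand that the two trees are in fact isomorphic, giving equality.

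The main obstacle I anticipate is the degenerate/boundary case analysis, not the generic inequality. When $T$ is a star with center $u$, the tree $T_{u;p,q}$ is just a ``spider'' with one long leg, one long leg, and several length-$1$ legs; hanging paths of lengths $(p+1,q-1)$ versus $(p,q)$ can genuinely produce isomorphic graphs when $q=1$ (both give the same spider), so the strict inequality must fail there, and one has to verify the lemma's hypothesis ``$q\ge 2$'' is exactly what's needed — that is, when $q\ge 2$ and $T$ is a star centered at $u$, are the two trees isomorphic? They are not in general, so one computes $IS$ directly on spiders: here $IS$ is attained at the end of a longest leg, its value is an explicit quadratic-type expression in the leg lengths, and one checks monotonicity under the shift $(p,q)\to(p+1,q-1)$ showing equality holds. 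Conversely, when $T$ is \emph{not} a star centered at $u$ (so $u$ together with at least $q-1\ge$ some of the $w_i$'s are genuine internal vertices surviving the shift), those surviving internal vertices are strictly farther from the long-path endpoint after the shift, forcing the strict inequality. I would organize the write-up as: (1) reduce to comparing the two distinguished leaf values via Lemma~\ref{inperipherian}; (2) the generic computation giving $IS(T_{u;p+1,q-1})-IS(T_{u;p,q})=\sum(\text{nonneg}) > 0$; (3) the spider case, where either the trees coincide ($q=1$) or an explicit computation yields equality ($q\ge 2$, $T=S$ centered at $u$).
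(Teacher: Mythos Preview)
Your plan has two genuine gaps.

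\textbf{The direction of comparison is backwards.} You propose to locate an internal peripherian vertex $x'$ of $H'=T_{u;p+1,q-1}$ and compare $s_{H'}(x',I(H'))$ with $s_H(x,I(H))$ for the ``corresponding'' leaf $x$ of $H=T_{u;p,q}$. Even if that difference is positive you only obtain $IS(H')>s_H(x,I(H))$, and there is no reason $x$ should be peripherian in $H$: when $T$ itself has a long branch away from $u$, the peripherian vertex of $H$ is a leaf of $T$, not $u_p$. The paper fixes this by starting from the peripherian vertex $x$ of $H$ (a leaf by Lemma~\ref{inperipherian}, hence in $L^*(T)\cup\{u_p\}$ after discarding $v_q$) and proving $s_{H'}(x',I(H'))\ge s_H(x,I(H))=IS(H)$; then $IS(H')\ge s_{H'}(x',I(H'))$ finishes. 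This forces an honest two--case split: $x\in L^*(T)$, where the difference is $p-q+1>0$ (for $q\ge 2$) or $d_H(x,u)+p>0$ (for $q=1$); and $x=u_p$, where the difference is $|I^*(T)|-1$ (for $q\ge 2$) or $|I^*(T)|+p>0$ (for $q=1$). Your sketch hand-waves the first case (``rule out or treat uniformly''), but it is exactly the case you cannot skip once you orient the comparison correctly.

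\textbf{The boundary analysis misreads the lemma.} When $q=1$ and $T$ is a star with center $u$, the trees $H$ and $H'$ are \emph{not} isomorphic: $u$ has degree $\delta_T(u)+2$ in $H$ but only $\delta_T(u)+1$ in $H'$. The lemma places $q=1$ in the ``otherwise'' clause and asserts strict inequality there regardless of $T$; e.g.\ for $T=S_3$, $p=2$, $q=1$ one computes $IS(H)=3<6=IS(H')$. So the step ``the trees coincide ($q=1$)'' in your item (3) is simply false. The only equality case is $q\ge 2$ with $|I^*(T)|=1$, i.e.\ $T$ a star centered at $u$: there the Case~2 difference $|I^*(T)|-1$ vanishes, and since $H'$ is a spider with longest leg $p+1$ the endpoint of that leg is peripherian in $H'$, giving $IS(H')=IS(H)$. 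Your direct spider computation for this sub-case is fine; it is the $q=1$ half that collapses.
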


\begin{proof}  Let $H=T_{u;p,q}$ and $H'=T_{u;p+1,q-1}$. Let $u_0u_1\dots u_p$ and $v_0v_1\dots v_q$ be the two `hanging'  paths at $u$ in $H$, where $u_0=v_0=u$. Let $x$ be a internal peripherian vertex of $H$.
Let $L^*(T)=L(T)$ if $u\not\in L(T)$ and $L^*(T)=L(T)\setminus\{u\}$ otherwise.
Then $x\in L(H)= L^*(T)\cup \{u_p, v_q\}$ by Lemma~\ref{inperipherian}. As $p\ge q$, we have $s_H(u_p,I(H))\ge s_H(v_q,I(H))$. So we may assume that $x\in L^*(T)\cup\{u_p\}$.
Let $I^*(T)=I(T)$ if $u\in I(T)$ and $I^*(T)=I(T)\cup \{u\}$ otherwise.

\noindent {\bf Case 1.} $x\in L^*(T)$.

If $q\ge 2$, then
\begin{eqnarray*}
s_H(x,I(H))
&=&\sum_{w\in I^*(T)}d_{H}(x,w)+\sum_{i=1}^{p-1}(d_{H}(x,u)+i)+\sum_{i=1}^{q-1}(d_{H}(x,u)+i)\\
&=&\sum_{w\in I^*(T)}d_{H}(x,w)+d_{H}(x,u)(p+q-2)+\sum_{i=1}^{p-1}i+\sum_{i=1}^{q-1}i,
\end{eqnarray*}
and
\begin{eqnarray*}
s_{H'}(x,I(H'))
&=&\sum_{w\in I^*(T)}d_{H'}(x,w)+\sum_{i=1}^{p}(d_{H'}(x,u)+i)+\sum_{i=1}^{q-2}(d_{H'}(x,u)+i)\\
&=&\sum_{w\in I^*(T)}d_{H'}(x,w)+d_{H'}(x,u)(p+q-2)+\sum_{i=1}^{p}i+\sum_{i=1}^{q-2}i.
\end{eqnarray*}
Then $s_{H'}(x,I(H'))-s_H(x,I(H))=p-q+1>0$ and thus
$IS(H')\ge s_{H'}(x,I(H'))>s_H(x,I(H))=IS(H)$.

If $q=1$, then
\begin{eqnarray*}
s_H(x,I(H))
&=&\sum_{w\in I^*(T)}d_{H}(x,w)+\sum_{i=1}^{p-1}(d_{H}(x,u)+i)\\
&=&\sum_{w\in I^*(T)}d_{H}(x,w)+d_{H}(x,u)(p-1)+\sum_{i=1}^{p-1}i,
\end{eqnarray*}
and
\begin{eqnarray*}
s_{H'}(x,I(H'))
&=&\sum_{w\in I^*(T)}d_{H'}(x,w)+\sum_{i=1}^{p}(d_{H'}(x,u)+i)\\
&=&\sum_{w\in I^*(T)}d_{H'}(x,w)+d_{H'}(x,u)p+\sum_{i=1}^{p}i.
\end{eqnarray*}
Then $s_{H'}(x,I(H'))-s_H(x,I(H))=d_{H}(x,u)+p>0$ and thus
$IS(H')\ge s_{H'}(x,I(H'))>s_H(x,I(H))=IS(H)$.

\noindent {\bf Case 2.} $x=u_p$.

Let $q\ge 2$. Then
\begin{eqnarray*}
s_H(x,I(H))
&=&\sum_{w\in I^*(T)}(p+d_{H}(u,w))+\sum_{i=1}^{p-1}i+\sum_{i=1}^{q-1}(p+i)\\
&=&\sum_{w\in I^*(T)}(p+d_{H}(u,w))+p(q-1)+\sum_{i=1}^{p-1}i+\sum_{i=1}^{q-1}i,
\end{eqnarray*}
and
\begin{eqnarray*}
s_{H'}(x,I(H'))
&=&\sum_{w\in I^*(T)}(p+1+d_{H'}(u,w))+\sum_{i=1}^{p}i+\sum_{i=1}^{q-2}(p+1+i)\\
&=&\sum_{w\in I^*(T)}(p+1+d_{H'}(u,w))+(p+1)(q-2)+\sum_{i=1}^{p}i+\sum_{i=1}^{q-2}i.
\end{eqnarray*}
Then $s_{H'}(x,I(H'))-s_H(x,I(H))=|I^*(T)|-1$. If $|I^*(T)|\ge 2$, then $IS(H')\ge s_{H'}(x,I(H'))>s_{H}(x,I(H))=IS(H)$. If  $|I^*(T)|=1$, then  $IS(H')=s_{H'}(x,I(H'))=s_{H}(x,I(H))=IS(H)$.

Let $q=1$. Then
\begin{eqnarray*}
s_H(x,I(H))=\sum_{w\in I^*(T)}(p+d_{H}(u,w))+\sum_{i=1}^{p-1}i,
\end{eqnarray*}
and
\begin{eqnarray*}
s_{H'}(x,I(H'))=\sum_{w\in I^*(T)}(p+1+d_{H'}(u,w))+\sum_{i=1}^{p}i.
\end{eqnarray*}
Then $s_{H'}(x,I(H'))-s_H(x,I(H))=|I^*(T)|+p>0$, and so $IS(H')\ge s_{H'}(x,I(H'))>s_{H}(x,I(H))=IS(H)$.

The result follows by combing the above two cases and noting that $|I^*(T)|=1$ if and only if $T$ is a star with center $u$.
\end{proof}

\begin{Theorem}
Let $T$ be a tree of order $n\ge 3$. Then
\[
IS(T)\le \frac{n^2-3n+2}{2}
\]
with equality if and only if $T\cong P_n$.
\end{Theorem}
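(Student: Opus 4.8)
The plan is to reduce $IS(T)$ to a bound on the status of a single vertex in the tree $T'$ obtained from $T$ by deleting all of its leaves, in the same spirit in which the minimum internal status was reduced to the minimum status of $T'$ earlier in the paper. By Lemma~\ref{inperipherian}, a vertex realizing $IS(T)$ is a leaf; fix such a leaf $x$, so $IS(T)=s_T(x,I(T))$. Since $n\ge 3$, the unique neighbour $x'$ of $x$ has degree at least $2$, hence $x'\in I(T)$.

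Next I would record that $I(T)$ induces a subtree of $T$: for any $u,w\in I(T)$, every vertex strictly between $u$ and $w$ on the unique $u$–$w$ path has two neighbours on that path, so has degree at least $2$ and lies in $I(T)$; thus $T':=T[I(T)]$ is connected, and since $T$ is a tree, $d_{T'}(u,w)=d_T(u,w)$ for all $u,w\in I(T)$. Because $x$ is a leaf, for every $w\in I(T)$ the $x$–$w$ path runs through $x'$, so $d_T(x,w)=1+d_{T'}(x',w)$; summing over $w\in I(T)$ and writing $m:=|I(T)|$ gives
\[
IS(T)=s_T(x,I(T))=m+s_{T'}(x').
\]
Here $s_{T'}(x')$ is the status of a vertex in a tree of order $m$, and I would establish the elementary bound $s_{T'}(x')\le\frac{m(m-1)}{2}$ as follows: list $V(T')$ as $z_1=x',z_2,\dots,z_m$ so that each $z_i$ with $i\ge 2$ has its parent among $z_1,\dots,z_{i-1}$; then $d_{T'}(x',z_i)\le i-1$ by induction on $i$, so $s_{T'}(x')\le\sum_{i=1}^m(i-1)=\frac{m(m-1)}{2}$. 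Hence
\[
IS(T)\le m+\frac{m(m-1)}{2}=\frac{m(m+1)}{2}\le\frac{(n-1)(n-2)}{2}=\frac{n^2-3n+2}{2},
\]
where the last inequality uses $m=|I(T)|\le n-2$, valid because every tree of order at least $2$ has at least two leaves.

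For the equality case, equality in the displayed chain forces $\frac{m(m+1)}{2}=\frac{(n-1)(n-2)}{2}$, hence $m=n-2$, i.e.\ $|L(T)|=2$, which means $T\cong P_n$; conversely, a direct computation gives $IS(P_n)=s_{P_n}(x,I(P_n))=\sum_{i=1}^{n-2}i=\frac{(n-1)(n-2)}{2}$ for $x$ an endpoint of $P_n$. All steps are short, so I do not expect a genuine obstacle; the only mildly delicate point is the auxiliary status bound $s_{T'}(x')\le\frac{m(m-1)}{2}$, and even its equality characterization turns out to be unnecessary here, since $m=n-2$ already pins down $T\cong P_n$.
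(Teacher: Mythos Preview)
Your proof is correct and takes a genuinely different route from the paper. The paper proves this theorem via Lemma~\ref{pq}: if $T$ is not a path, one locates a vertex of degree at least three with two hanging paths at it and applies the path-shifting lemma to produce a tree with strictly larger maximum internal status, contradicting extremality. Your argument instead bypasses Lemma~\ref{pq} entirely by observing (via Lemma~\ref{inperipherian}) that $IS(T)=m+s_{T'}(x')$ with $T'=T[I(T)]$ and $m=|I(T)|$, and then combining the elementary status bound $s_{T'}(x')\le\binom{m}{2}$ with $m\le n-2$. This is shorter and more self-contained for the present theorem, and mirrors nicely the paper's treatment of $is(T)$ via $s(T')$; on the other hand, the paper's approach has the advantage that Lemma~\ref{pq} is reusable and immediately yields the maximum-degree result (Theorem~\ref{De}) as well.
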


\begin{proof} Let $T$ be a tree of order $n$  that maximizes the maximum internal status. Suppose that $T$ is not a path.  Let $x\in L(T)$.  Then we choose a vertex of degree at least three, say $u$, such that $d_T(x,u)$ is as large as possible. Then there are two hanging paths $P$ and $Q$  at $u$ in $T$. By Lemma~\ref{pq}, we can obtain a tree  $T'$ so that $IS(T')>IS(T)$, a contradiction. Thus  $T\cong P_n$.
Evidently, $IS(P_n)=\sum_{i=1}^{n-2}i=\frac{n^2-3n+2}{2}$.
\end{proof}

\begin{Theorem} \label{De} Let $T$ be a tree of order $n$ with maximum degree $\Delta$, where $2\le \Delta\le n-1$. Then $IS(T)\le \frac{1}{2}(n-\Delta)(n-\Delta+1)$ with equality if and only if $T$ is a starlike tree with at least $\Delta-2$ hanging paths being of length one.
\end{Theorem}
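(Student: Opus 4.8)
The plan is to reduce everything to bounding the status of a single vertex inside the tree $T'$ obtained from $T$ by deleting all its leaves. First I would invoke Lemma~\ref{inperipherian}: an internal peripherian vertex $x$ of $T$ is a leaf, and since $n\ge 3$ its unique neighbour $x_1$ is an internal vertex. The set $I(T)$ induces a tree $T'=T[I(T)]$ (the path in $T$ between any two internal vertices uses only internal vertices) on exactly $m:=|I(T)|$ vertices, and $x_1\in V(T')$. For every $w\in I(T)$ one has $d_T(x,w)=1+d_T(x_1,w)=1+d_{T'}(x_1,w)$, so
\[
IS(T)=s_T(x,I(T))=m+s_{T'}(x_1).
\]

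Next I would use two ingredients. The first is a leaf count: since $T$ has a vertex $v$ of degree $\Delta$, deleting $v$ yields $\Delta$ branches, each contributing at least one leaf of $T$ (a trivial branch is itself a leaf, and a nontrivial branch, being a tree with at least two leaves, has a leaf not adjacent to $v$); hence $|L(T)|\ge\Delta$ and $m=n-|L(T)|\le n-\Delta$. The second is the standard status bound: any vertex $u$ of a tree $R$ on $m$ vertices satisfies $s_R(u)\le\binom{m}{2}$, with equality iff $R\cong P_m$ and $u$ is a terminal vertex (if $u$ has eccentricity $e$ there is at least one vertex at each distance $1,\dots,e$, and maximising $\sum_{j}j\,|\{w:d_R(u,w)=j\}|$ under these constraints forces $e=m-1$). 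Combining, and using that $\binom{m+1}{2}$ is increasing in $m$,
\[
IS(T)=m+s_{T'}(x_1)\le m+\binom{m}{2}=\binom{m+1}{2}\le\binom{n-\Delta+1}{2}=\tfrac12(n-\Delta)(n-\Delta+1).
\]

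For the equality characterisation, equality throughout forces $m=n-\Delta$ and $s_{T'}(x_1)=\binom{m}{2}$. The first condition means $|L(T)|=\Delta$, i.e.\ every branch of $T-v$ contributes exactly one leaf of $T$; examining the branches, this happens precisely when each branch is a path attached to $v$ at an endpoint, so that $T$ is starlike with centre $v$ of degree $\Delta$ (and no other vertex of degree exceeding $2$). The second condition forces $T'\cong P_m$ with $x_1$ an endpoint; but $T'$ is exactly $v$ together with the internal vertices of the hanging paths, so $T'$ being a path means at most two branches at $v$ have length $\ge 2$, i.e.\ at least $\Delta-2$ of them have length one. Conversely, for any starlike $T$ with centre of degree $\Delta$ and at least $\Delta-2$ branches of length one, $m=n-\Delta$ and $T'\cong P_m$; taking $x$ to be a leaf whose neighbour is an endpoint of $T'$ (the far leaf of a longest branch, or any leaf at $v$ when $T\cong S_n$) gives $s_T(x,I(T))=m+\binom m2=\tfrac12(n-\Delta)(n-\Delta+1)$, which by the upper bound equals $IS(T)$.

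The hard part will be the bookkeeping in the equality case: verifying carefully that $|L(T)|=\Delta$ forces $T$ to be starlike with the maximum-degree vertex as centre (ruling out two vertices of degree $\Delta$, or a vertex of degree $\ge 3$ interior to a branch), and then translating ``the internal subtree is a path'' into ``at least $\Delta-2$ hanging paths of length one.'' The rest is the clean identity $IS(T)=m+s_{T'}(x_1)$ together with the standard bound on $s_{T'}(x_1)$; alternatively one could extract the starlike/short-path structure by repeatedly applying Lemma~\ref{pq} to any two hanging paths, but the counting argument above is shorter and avoids the difficulty that such moves can alter the maximum degree.
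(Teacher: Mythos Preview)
Your argument is correct and takes a genuinely different route from the paper. The paper proceeds by an extremal/transformation argument: it picks a maximiser, and repeatedly applies Lemma~\ref{pq} (the ``$p\mapsto p+1$, $q\mapsto q-1$'' shift on two hanging paths) to force first that only one vertex has degree at least three, and then that all but two hanging paths have length one. Your approach instead expresses $IS(T)$ exactly as $m+s_{T'}(x_1)$ via Lemma~\ref{inperipherian} and the identity $d_T(x,w)=1+d_{T'}(x_1,w)$, and then bounds the two ingredients separately by the elementary facts $|L(T)|\ge\Delta$ and $s_R(u)\le\binom{|V(R)|}{2}$. This bypasses Lemma~\ref{pq} altogether, gives the bound in one line, and makes the equality analysis a matter of reading off when $|L(T)|=\Delta$ (forcing each branch at a maximum-degree vertex to be a path attached at an endpoint) and when $T'\cong P_m$ (forcing at most two long branches). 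The paper's method has the advantage of reusing a lemma already proved for the previous theorem and of being the standard ``local move'' template; yours is shorter, self-contained, and sidesteps the issue you correctly flag at the end---that Lemma~\ref{pq} moves can change the maximum degree, which the paper handles only by always shifting at a vertex \emph{different} from the fixed degree-$\Delta$ vertex.
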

\begin{proof}  It is trivial if $\Delta=2$. Suppose that $\Delta\ge 3$.
Let $T$ be a tree of order $n$ with maximum degree $\Delta$ that maximizes the maximum internal status. Let $u$ be a vertex of degree $\Delta$. If there is a vertex different from $u$ with degree at least three, then we may choose such a vertex $v$ by requiring that $d_T(u,v)$ is as large as possible. This implies that there are two hanging paths at $v$ in $T$. By Lemma~\ref{pq}, there is a tree of order $n$ with maximum degree $\Delta$ having larger maximum internal status, which is a contradiction. That is, $u$ is the only vertex of degree at least three. In other words, $T$ is a starlike tree. By Lemma~\ref{pq} again, $\Delta-2$ hanging paths are of length one. Thus $T$ is a starlike tree with at least $\Delta-2$ hanging paths being of length one. The result follows by noting that $IS(T)=IS(P_{n, \Delta-1})=\sum_{i=1}^{n-\Delta}i=\frac{1}{2}(n-\Delta)(n-\Delta+1)$.
\end{proof}

\vspace{5mm}

\noindent {\it Acknowledgement.}
This work was supported by  National Natural Science Foundation of China (No.~11671156).

%
\end{document}